\documentclass[fleqn,10pt]{wlscirep}

\usepackage[sort&compress,square,comma,numbers]{natbib}

\usepackage{graphicx}
\usepackage{amsmath,amssymb}
\usepackage{amsfonts}
\usepackage{bm}
\usepackage{hyperref}
\usepackage{kbordermatrix}
\usepackage{amsthm}
\usepackage{subfigure}
\usepackage{sgame, tikz}

\setlength{\parindent}{0pt}

\title{Symmetric Decomposition of Asymmetric Games}

\author[1,2,*]{Karl Tuyls}
\author[1]{Julien Perolat}
\author[1]{Marc Lanctot}
\author[1]{Georg Ostrovski}
\author[2]{Rahul Savani}
\author[1]{Joel Leibo}
\author[3]{Toby Ord}
\author[1]{Thore Graepel}
\author[1]{Shane Legg}
\affil[1]{Google DeepMind, 6 Pancras Square, N1C 4AG London, UK}
\affil[2]{University of Liverpool, Dept. of Computer Science, Ashton Street, L69 3BX
Liverpool, UK}
\affil[3]{Oxford University, Faculty of Philosophy, Woodstock Road, OX2 6GG Oxford, UK}

\affil[*]{karltuyls@google.com}

\newtheorem{theorem}{Theorem}
\newtheorem{corollary}{Corollary}

\newtheorem*{definition}{Definition}
\newtheorem{property}{Property}

\newcommand{\R}{\mathbb{R}}

\keywords{Asymmetric Games, Replicator Dynamics, Counterpart Games}

\begin{abstract}
We introduce new theoretical insights into two-population asymmetric games allowing for an elegant symmetric decomposition into two single population symmetric games. Specifically, we show how an asymmetric bimatrix game $(A,B)$ can be decomposed into its symmetric counterparts by envisioning and investigating the payoff tables ($A$ and $B$) that constitute the asymmetric game,  as two independent, single population, symmetric games. We reveal several surprising formal relationships between an asymmetric two-population game and its symmetric single population counterparts, which facilitate a convenient analysis of the original asymmetric game due to the dimensionality reduction of the decomposition. The main finding reveals that if $(x,y)$ is a Nash equilibrium  of an asymmetric game $(A,B)$, this implies that $y$ is a Nash equilibrium of the symmetric counterpart game determined by payoff table $A$, and $x$ is a Nash equilibrium of the symmetric counterpart game determined by payoff table $B$. Also the reverse holds and combinations of Nash equilibria of the counterpart games form Nash equilibria of the asymmetric game. We illustrate how these formal relationships aid in identifying and analysing the Nash structure of asymmetric games, by examining the evolutionary dynamics of the simpler counterpart games in several canonical examples.

\end{abstract}
\begin{document}

\flushbottom
\maketitle
%
%
\thispagestyle{empty}


\section*{Introduction}


We are interested in analysing the Nash structure and evolutionary dynamics of strategic interactions in multi-agent systems. Traditionally, such interactions have been studied using single population replicator dynamics models, which are limited to symmetric situations, i.e., players have access to the same set of strategies and the payoff structure is symmetric as well~\cite{BloembergenTHK15}. For instance, Walsh et al. introduce an empirical game theory methodology (also referred to as heuristic payoff table method) that allows for analysing multiagent interactions in complex multiagent games\cite{Walsh02,Walsh03}. This method has been extended by others and been applied e.g. in continuous double auctions, variants of poker and multi-robot systems\cite{TuylsP07,PonsenTKR09,Wellman06,BloembergenTHK15,PhelpsCMNPS07,PhelpsPM04,Lanctot17}. Similar evolutionary methods have been applied to the modelling of human cooperation, language, and complex social dilemma's\cite{Perc,Moreira13,Santos16,PerolatLZBTG17,LazaridouPB16b,DeVylder,Cho}. Though these evolutionary methods have been very useful in providing insights into the type and form of interactions in such systems, the underlying Nash structure, and evolutionary dynamics, the analysis is limited to symmetric situations, i.e., players or agents can be interchanged and have access to the same strategy set, in other words there are no different roles for the various agents involved in the interactions (e.g. a seller vs a buyer in an auction). As such this method is not directly applicable to asymmetric situations in which the players can choose strategies from different sets of actions, with asymmetric payoff structures. Many interesting multiagent scenarios involve asymmetric interactions though, examples include simple games from game theory such as e.g. the Ultimatum Game or the Battle of the Sexes and more complex board games that can involve various roles such as Scotland Yard, but also trading on the internet for instance can be considered asymmetric.

There exist approaches that deal with asymmetry in multiagent interactions, but they usually propose to transform the asymmetric game into a symmetric game, with new strategy sets and payoff structure, which then can be analysed again in the context of symmetric games. This is indeed a feasible approach, but not easily scalable to the complex interactions mentioned before, nor is it practical or intuitive to construct a new symmetric game before the asymmetric one can be analysed in full. The approach we take in this paper does not require constructing a new game and is theoretically underpinned, revealing some new interesting insights in the relation between the Nash structure of symmetric and asymmetric games.

Analysing multiagent interactions using evolutionary dynamics, or replicator dynamics, provides not only valuable insights into the (Nash) equilibria and their stability properties, but also sheds light on the behaviour trajectories of the involved agents and the basins of attraction of the equilibrium landscape\cite{Nowak06,Tuyls03,TuylsP07,DeVylder,BloembergenTHK15}. As such it can be a very useful tool to analyse the Nash structure and dynamics of several interacting agents in a multiagent system. However, when dealing with asymmetric games the analysis quickly becomes tedious, as in this case we have a coupled system of replicator equations, and changes in the behaviour of one agent immediately change the dynamics in the linked replicator equation describing the behaviour of the other agent, and vice versa. 
This paper sheds new light on asymmetric games, and reveals a number of theorems, previously unknown, that allow for a more elegant analysis of asymmetric multiagent games. The major innovation is that we decouple asymmetric games in their \textit{symmetric counterparts}, which can be studied in a symmetric fashion using symmetric replicator dynamics. The Nash equilibria of these symmetric counterparts are formally related to the Nash equilibria of the original asymmetric game, and as such provide us with a means to analyse the asymmetric game using its symmetric counterparts. Note that we do not consider asymmetric replicator dynamics in which both intra-species (within a population) and inter-species interactions (between different populations) take place~\cite{Cressman14}, but we only consider inter-species interactions in which two different roles interact, i.e., truly asymmetric games~\cite{Selten80}. 

One of our main findings is that the \textit{$x$ strategies} (player 1) and the \textit{$y$ strategies} (player 2) of a mixed Nash equilibrium of full support in the original asymmetric game, also constitute Nash equilibria in the symmetric counterpart games. The symmetric counterpart of player 1 ($x$) is defined on the payoff of player 2 and vice versa. We prove that for full support strategies, Nash equilibria of the asymmetric game are pairwise combinations of Nash equilibria of the two symmetric counterparts. Then, we show that this property stands without the assumption of full support as well. Though this analysis does not allow us to visualise the evolutionary dynamics of the asymmetric game itself, it does allow us to identify its Nash equilibria by investigating the evolutionary dynamics of the counterparts. As such we can easily distinguish Nash equilibria from other restpoints in the asymmetric game and get an understanding of its underlying Nash structure.

The paper is structured as follows: we first describe related work, then we continue with introducing essential game theoretic concepts. Subsequently, we present the main contributions and we illustrate the strengths of the theory by carrying out an evolutionary analysis on four canonical examples. Finally, we discuss the implications and provide a deeper understanding of the theoretical results.

\section*{Related Work}

The most straightforward and classical approach to asymmetric games is to treat agents as evolving separately: one population per player, where each agent in a population interacts by playing against agent(s) from the other population(s), i.e. co-evolution~\cite{Taylor79}. This assumes that players of these games are always fundamentally attached to one role and never need to know/understand how to play as the other player. In many cases, though, a player may want to know how to play as either player. For example, a good chess player should know how to play as white or black. This reasoning inspired the role-based symmetrization of asymmetric games~\cite{Guanersdorfer91}.

The role-based symmetrization of an arbitrary bimatrix game defines a new (extensive-form) game where before choosing actions the role of the two players are decided by uniform random chance. If two roles are available, an agent is assigned one specific role with probability $\frac{1}{2}$. Then, the agent plays the game under that role and collects the role-specific payoff appropriately. A new strategy space is defined, which is the product of both players' strategy spaces, and a new payoff matrix computing (expected) payoffs for each combination of pure strategies that could arise under the different roles. There are relationships between the sets of evolutionarily stable strategies and rest points of the replicator dynamics between the original and symmetrized game\cite{Cressman03,Cressman14}.

This single-population model forces the players to be general: able to devise a strategy for each role, which may unnecessarily complicate algorithms that compute strategies for such players.
In general, the payoff matrix in the resulting role-based symmetrization is $n!$ ($n$ being the number of agents) times larger due to the number of permutations of player role assignments.
There are two-population variants that formulate the problem slightly differently: a new matrix that encapsulates both players' utilities assigns 0 utility to combinations of roles that are not in one-to-one correspondence with players\cite{Accinelli2011}. This too, however, results in an unnecessarily larger (albeit sparse) matrix.

Lastly, there are approaches that have structured asymmetry, that arises due to ecological constraints such as locality in a network and genotype/genetic relationships between population members~\cite{Avoy15}. Similarly here, replicator dynamics and their properties are derived by transforming the payoff matrix into a larger symmetric matrix.

Our primary motivation is to enable analysis techniques for asymmetric games. However, we do this by introducing new {\it symmetric counterpart dynamics} rather than using standard dynamics on a symmetrised game. Therefore, the traditional role interpretation as well as any method that enlarges the game for the purpose of obtaining symmetry is unnecessarily complex for our purposes. Consequently, we consider the original co-evolutionary interpretation, and derive new (lower-dimensional) strategy space mappings.

\section*{Preliminaries and Methods}\label{sec:prelim}

In this section we concisely outline (evolutionary) game theoretic concepts necessary to understand the remainder of the paper~\cite{Weibull97, Hofbauer98, Cressman03}. We briefly specify definitions of Normal Form Games and solution concepts such as Nash Equilibrium in a single population game and in a two-population game. Furthermore, we introduce the Replicator Dynamics (RD) equations for single and two population games and briefly discuss the concept of Evolutionary Stable Strategies (ESS) introduced by Smith and Price in 1973~\cite{Smith73,Zeeman80,Zeeman81}.

\subsection*{Normal Form Games and Nash Equilibrium}

\begin{definition}
A two-player Normal Form Game (NFG) $G$ is a 4-tuple $G=(S_1,S_2,A,B)$, with pure strategy sets $S_1$ and $S_2$ for player 1, respectively player 2, and corresponding payoff tables $A$ and $B$. Both players choose their pure strategies (also called actions) simultaneously.
\end{definition}

The payoffs for both players are represented by a bimatrix $(A, B)$, which gives the payoff for the row player in $A$, and the column player in $B$ (see
Table \ref{fig:NFG} for a two strategy example). Specifically, the row player chooses one of the two rows, the column player chooses one of the columns, and the outcome of their joint strategy determines the payoff to both.

\begin{table}[tb] 
\centering
	\caption{General payoff bimatrix (A, B) for a two-player two-action normal form game, where player 1 can choose between actions $A_1$ and $A_2$, and player 2 can choose between actions $B_1$ and $B_2$.}\label{fig:NFG}  
\begin{game}{2}{2}[Player 1][Player 2]
	    &  $B_1$      &  $B_2$     \\
	 $A_1$  &  $a_{11}, b_{11}$ & $a_{12},b_{12}$  \\
	 $A_2$  &  $a_{21}, b_{21}$ & $a_{22}, b_{22}$\\
\end{game}
\end{table}

In case $S_1$=$S_2$ and $A=B^T$ the players are interchangeable and we call the game symmetric. In case at least one of these conditions is not met we have an asymmetric game.
In classical game theory the players are considered to be individually rational, in the sense that each player is perfectly logical trying to maximise their own payoff, assuming the others are doing likewise. Under this assumption, the Nash equilibrium (NE) solution concept can be used to study what players will reasonably choose to do.

\noindent We denote a strategy profile of the two players by the tuple $(x,y) \in \Delta S_1\times \Delta S_2$, where $\Delta S_1$, $\Delta S_2$ are the sets of mixed strategies, that is, distributions over the pure strategy sets or action sets. The strategy $x$ (respectively $y$) is represented as a vector in $\mathbb{R}^{|S_1|}$ (respectively $\mathbb{R}^{|S_2|}$) where each entry is the probability of playing the corresponding action.
The payoff associated with player 1 is $x^TA y$ and $x^TB y$ is the payoff associated with player 2.
A strategy profile $(x,y)$ now forms a NE if no single player can do better by unilaterally switching to a different strategy. In other words, each strategy in a NE is a best response against all other strategies in that equilibrium. Formally we have,

\begin{definition}
A strategy profile $(x,y)$ is a Nash equilibrium, iff the following holds: \\

$\forall x' \in \Delta S_1,\; x^T A y \geq x'^T A y$ and $\forall y' \in \Delta S_2,\;  x^T B y \geq x^T B y'$

\end{definition}

In the following, we will write $NE(A,B)$ for the set of Nash equilibria of the game $G=(S_1, S_2, A, B)$. Furthermore, a Nash equilibrium is said to be pure if only one strategy of the strategy set is played and we will say that it is completely mixed if all pure strategies are played with a non-zero probability.

In evolutionary game theory, games are often considered with a single population. In other words, a player is playing against itself and only a single payoff table $A$ is necessary to define the game (note that this definition only makes sense when $|S_1|=|S_2|=n$). In this case, the payoff received by the player is $x^T A x$ and the following definition describes the Nash equilibrium:

\begin{definition}
In a single population game, a strategy $x$ is a Nash equilibrium, iff the following holds: \\

$\forall x', x^T A x \geq x'^T A x$
\end{definition}
In this single population case, we will write that $x \in NE(A)$. 

\subsection*{Replicator Dynamics}

Replicator Dynamics in essence are a system of differential equations that describe how a population of pure strategies, or replicators, evolve through time~\cite{Weibull97,Gintis09}. In their most basic form they correspond to the biological \emph{selection} principle, i.e. survival of the fittest. More specifically the \emph{selection} replicator dynamic mechanism is expressed as follows:

\begin{equation}\label{eq:singlerd}
 \frac{dx_i}{dt}=x_i[(A x)_i-x^TA x] 
\end{equation}

Each replicator represents one (pure) strategy $i$. This strategy is inherited by all the offspring of the replicator. $x_i$ represents the density of strategy $i$ in the population, $A$ is the payoff matrix which describes the different payoff
values each individual replicator receives when interacting with other replicators in the population. The state of the population $x$ can be described as a probability vector $x$ $= (x_1, x_2, ..., x_n)$ which expresses the different densities of all the different types of replicators in the population. Hence $(Ax)_i$ is the payoff which replicator $i$ receives in a population with state $x$ and $x^T Ax$ describes the average payoff in the population. The support $I_x$ of a strategy is the set of actions (or pure strategies) that are played with a non-zero probability $I_x = \{i \; | x_i > 0\}$.

In essence this equation compares the payoff a strategy receives with the average payoff of the entire population. If the strategy scores better than average it will be able to replicate \emph{offspring}, if it scores lower than average its presence in the population will diminish and potentially approach extinction. The population remains in the simplex ($\sum_i x_i = 1$) since $\sum_i \frac{dx_i}{dt} = 0$.

\subsection*{Evolutionary Stable Strategies}
Originally, an Evolutionary Stable Strategy was introduced in the context of a symmetric single population game~\cite{Gintis09,Smith73} (as introduced in the previous section), though this can be extended to multi-population games as well as defined in the next section\cite{Sandholm10,Cressman03}. Imagine a population of simple agents playing the same strategy. Assume that this population is invaded by a different strategy, which is initially played by a small proportion of the total population. If the reproductive success of the new strategy is smaller than the original one, it will not overrule the original strategy and will eventually disappear. In this case we say that the strategy is \emph{evolutionary stable} (ESS) against this newly appearing strategy. In general, we say a strategy is ESS if it is robust against evolutionary pressure from any appearing mutant replicator not yet present in the population (or only with a very small fraction).

\subsection*{Asymmetric Replicator Dynamics}

We have assumed replicators come from a single population, which makes the model only applicable to symmetric games. One can now wonder how the previous introduced equations extend to asymmetric games. Symmetry assumes that strategy sets and corresponding payoffs are the same for all players in the interaction. An example of an asymmetric game is the famous Battle of the Sexes (BoS) game illustrated in Table \ref{fig:BoS}. In this game both players do have the same strategy set, i.e., go to the opera or go to the movies, however, the corresponding payoffs for each are different, expressing the difference in preferences that both players have in their respective roles.

\begin{table}[ht] 
\centering
\begin{game}{2}{2}[][]
	    &  $O$      &  $M$     \\
	 $O$  &  $3,2$ & $0,0$  \\
	 $M$  &  $0,0$ & $2,3$\\
\end{game}
\caption{Payoff bimatrix for the Battle of the Sexes game. Strategies $O$ and $M$ correspond to going to the Opera and going to the Movies respectively.}\label{fig:BoS}
\end{table}

If we would like to carry out a similar evolutionary analysis as before we will now need two populations, one for each player over its respective strategy set, and we need to use the asymmetric or coupled version of the replicator dynamics, i.e.,

\begin{definition}
\begin{equation}\label{eq:P1rd}
 \frac{dx_i}{dt}=x_i[(A y)_i - x^TA y] \qquad \text{and} \qquad \frac{dy_i}{dt}=y_i[(x^T B)_i- x^TB y] 
\end{equation}
\end{definition}

\noindent with payoff tables $A$ and $B$, respectively for player 1 and 2. In case $A = B^T$ the equations reduce to the single population model.

\subsection*{Symmetric Counterpart Replicator Dynamics}

We now introduce a new concept, the \emph{symmetric counterpart} replicator dynamics (SCRD) of asymmetric replicator equations. We consider the two payoff tables $A$ and $B$ as two independent games that are no longer coupled, and in which both players participate. In the first counterpart game all players choose their strategy according to distribution $y$, the original strategy or replicator distribution for the 2nd population, or player 2, and in the second counterpart game all players choose their strategy according to distribution $x$, the original strategy or replicator distribution for the 1st population, or player 1. This gives us the following two sets of replicator equations:


\begin{minipage}[b]{.3\textwidth}
\vspace{-\baselineskip}
\begin{equation}
\frac{dy_i}{dt}=y_i[(Ay)_i - y^TAy] \label{ESSDecoupledA} 
\end{equation}
\end{minipage}%
\qquad and \qquad
\begin{minipage}[b]{.3\textwidth}
\vspace{-\baselineskip}
\begin{equation}
\frac{dx_i}{dt}=x_i[(x^TB)_i-x^TBx] \label{ESSDecoupledB}
\end{equation}
\end{minipage} 

In the results Section we will introduce some remarkable relationships between the equilibria of asymmetric replicator equations and the equilibria of their symmetric counterpart equations, which facilitates, and substantially simplifies, the analysis of the Nash structure of asymmetric games.

\subsection*{Visualising evolutionary dynamics}
One can visualise the replicator dynamics in a directional field and trajectory plot, which provides useful information about the equilibria, flow of dynamics and basins of attraction. As long as we stay in the realm of 2-player 2-action games this can be achieved relatively easily by plotting the probability with which player 1 plays its first action on the x-axis, and the probability with which player 2 plays its first action on the y-axis. Since there are only 2 actions for each player, this immediately gives a complete image of the dynamics over all strategies, since the probability for the second action $a_2$ to be chosen is one minus the first. By means of example we show a directional field plot here for the famous Prisoner's dilemma game (game illustrated in Table \ref{fig:PD}).

\begin{table}[ht] 
\centering
\begin{game}{2}{2}[][]
	    &  $C$      &  $D$     \\
	 $C$  &  $3,3$ & $0,5$  \\
	 $D$  &  $5,0$ & $1,1$\\
\end{game}
\caption{Payoff matrix for the Prisoner's Dilemma game. Strategies $D$ and $C$ correspond to the actions \textit{Defect} and \textit{Cooperate}.}\label{fig:PD}
\end{table}

The directional field plot, and corresponding trajectories, are shown in Figure \ref{fig:dfieldPD}. For both players the axis represents the probability with which they play \textit{Defect} (D). As can be observed all dynamics are absorbed by the pure Nash equilibrium $(D,D)$ in which both players defect.

\begin{figure}[h!]
    \centering
    \includegraphics[width=6cm]{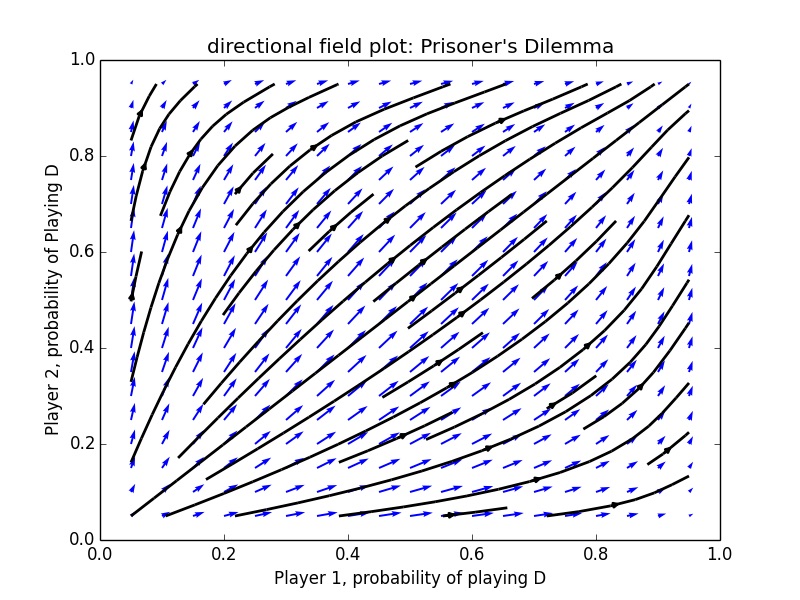}
    \caption{Legend. Directional field plot of the Prisoner's Dilemma game.}
    \label{fig:dfieldPD}
\end{figure}

Unfortunately, we cannot use the same type of plot illustrating the dynamics when we consider more than two strategies. However, if we move to single population games we can easily rely on a simplex plot. In the case of a two population game the situation become tedious as we will discuss later. Specifically, the set of probability distributions over $n$ elements can be represented by the set of vectors $(x_1,...,x_n)$ $\in \R^n$, satisfying $x_1,..., x_n \geq 0$ and $\sum_i x_i = 1$. This can be seen to correspond to an $n-1$-dimensional structure called a simplex $\Sigma_n$ (or simply $\Sigma$,
when $n$ is clear from the context).
In many of the figures throughout the paper we
use $\Sigma_3$, projected as an equilateral triangle. For example, consider the single population \textit{Rock-Paper-Scissors} game, described by the payoff matrix shown in Figure \ref{fig:RSP}a.

\begin{figure}[h!]
    \centering
    \includegraphics[width=12cm]{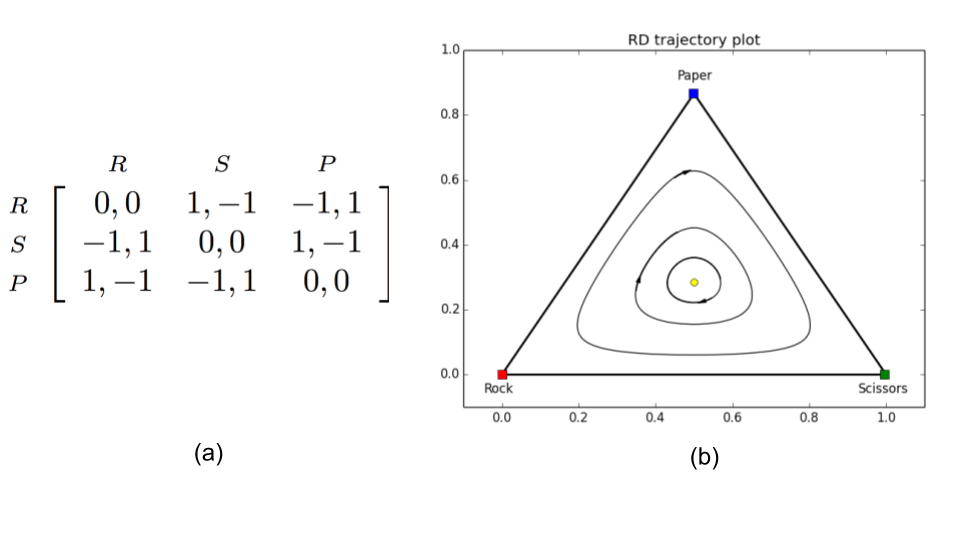}
    \caption{Legend. (a) Payoff matrix for the Rock-Paper-Scissors game. Strategies $R$, $S$ and $P$ correspond to playing respectively \textit{R}ock, \textit{S}cissors, \textit{P}aper. (b) $\Sigma_3$ Trajectory plot of the Rock-Paper-Scissors game. The Nash equilibrium is marked with a full yellow dot.}
    \label{fig:RSP}
\end{figure}

The game has one completely mixed Nash equilibrium, being $(\frac{1}{3},\frac{1}{3},\frac{1}{3})$. In Figure \ref{fig:RSP}b we have plotted the replicator equations $\Sigma_3$ trajectory plot for this game. Each of the corners of the simplex corresponds to one of the pure strategies, i.e., $\{Rock$, $Paper$, $Scissors\}$. For three strategies in the strategy simplex we then plot a trajectory illustrating the flow of the replicator dynamics. As can be observed from the plot, trajectories of the dynamics cycle around the mixed Nash equilibrium, which is not ESS and not asymptotically stable.

In fact, three categories of rest points can be discerned in single population replicator dynamics (see Figures~\ref{fig:ess}, \ref{fig:nash_not_ess}, \ref{fig:rest_point}). Figure~\ref{fig:ess} displays a stable Nash equilibrium called an Evolutionary Stable Strategy (ESS). An ESS is an attractor of the RD dynamical system defined in the previous section and has been one of the main foci of evolutionary game theory. The second type of rest points are the ones that are Nash but not ESS (Figure~\ref{fig:nash_not_ess}). These rest points are not an attractor of the RD but they have a specific form. Specifically, if a strategy is a Nash equilibrium, all the actions that are not part of the support are dominated, i.e., the support is invariant under the RD, which means that the fraction of a strategy cannot become non-zero if it is zero at some point.

The third category that can occur is illustrated in Figure~\ref{fig:rest_point}. Those rest points are not Nash and thus there is an action outside of the support that is dominant. Thus, the flow will leave from points in the close vicinity of the rest point, which is called a \emph{source}.


\begin{figure}[!tbp]
  \centering
  \begin{minipage}[c]{0.32\textwidth}
\includegraphics[width=6cm]{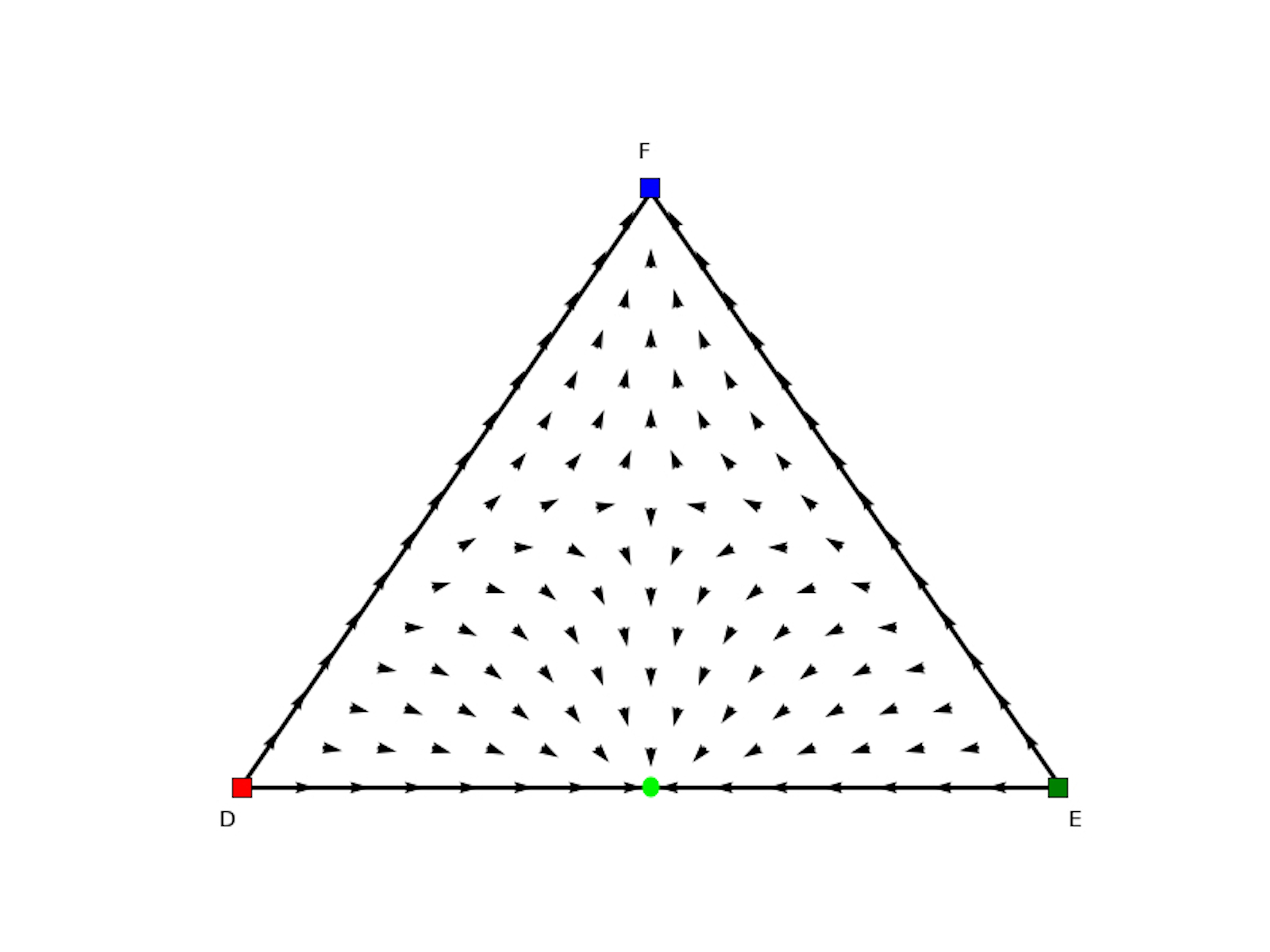}
   \caption{ESS}
    \label{fig:ess}
  \end{minipage}
  \begin{minipage}[c]{0.32\textwidth}
\includegraphics[width=6cm]{{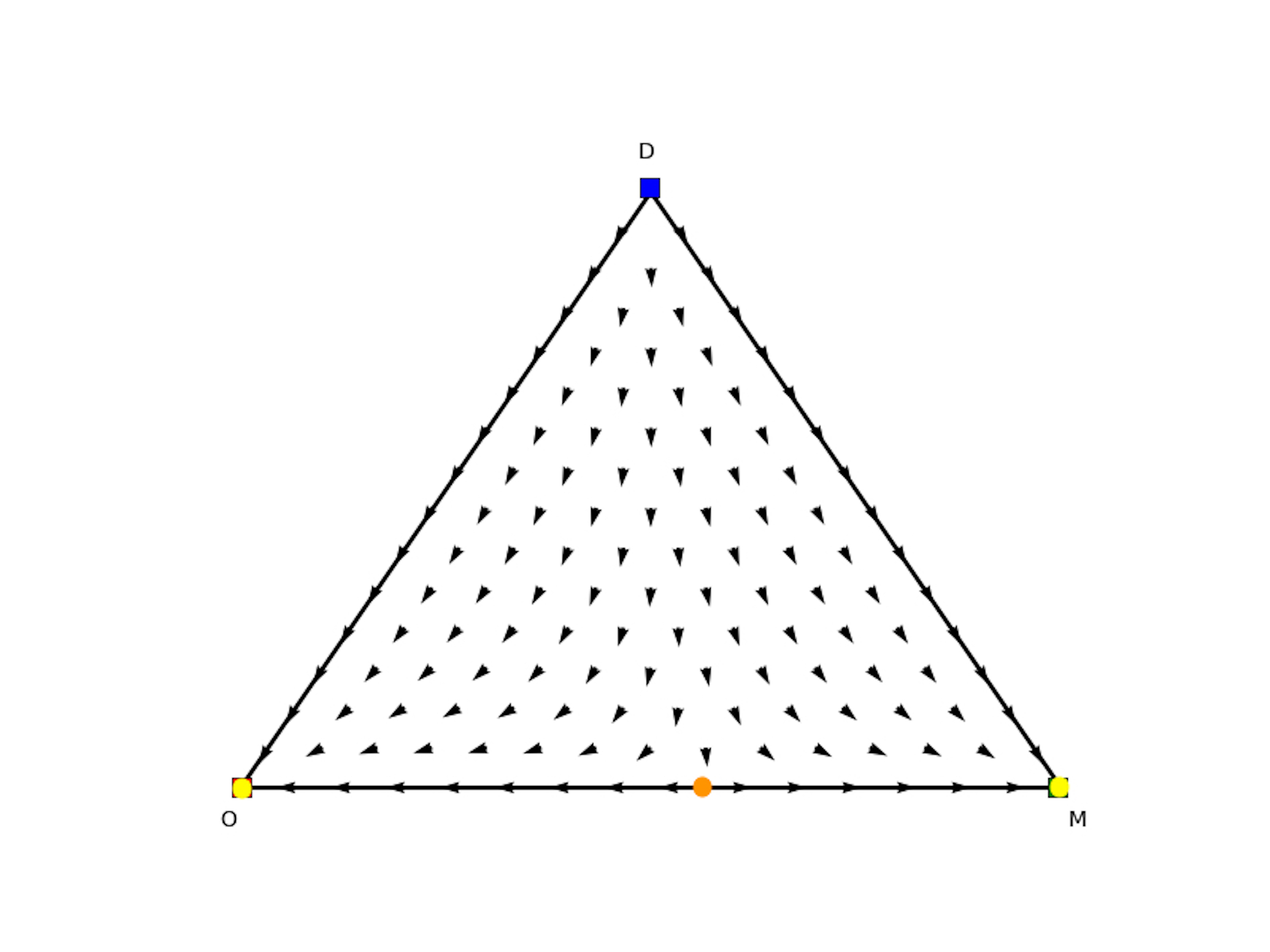}}
     \caption{NE but not ESS}
    \label{fig:nash_not_ess}
  \end{minipage}
  \begin{minipage}[c]{0.32\textwidth}
\includegraphics[width=6cm]{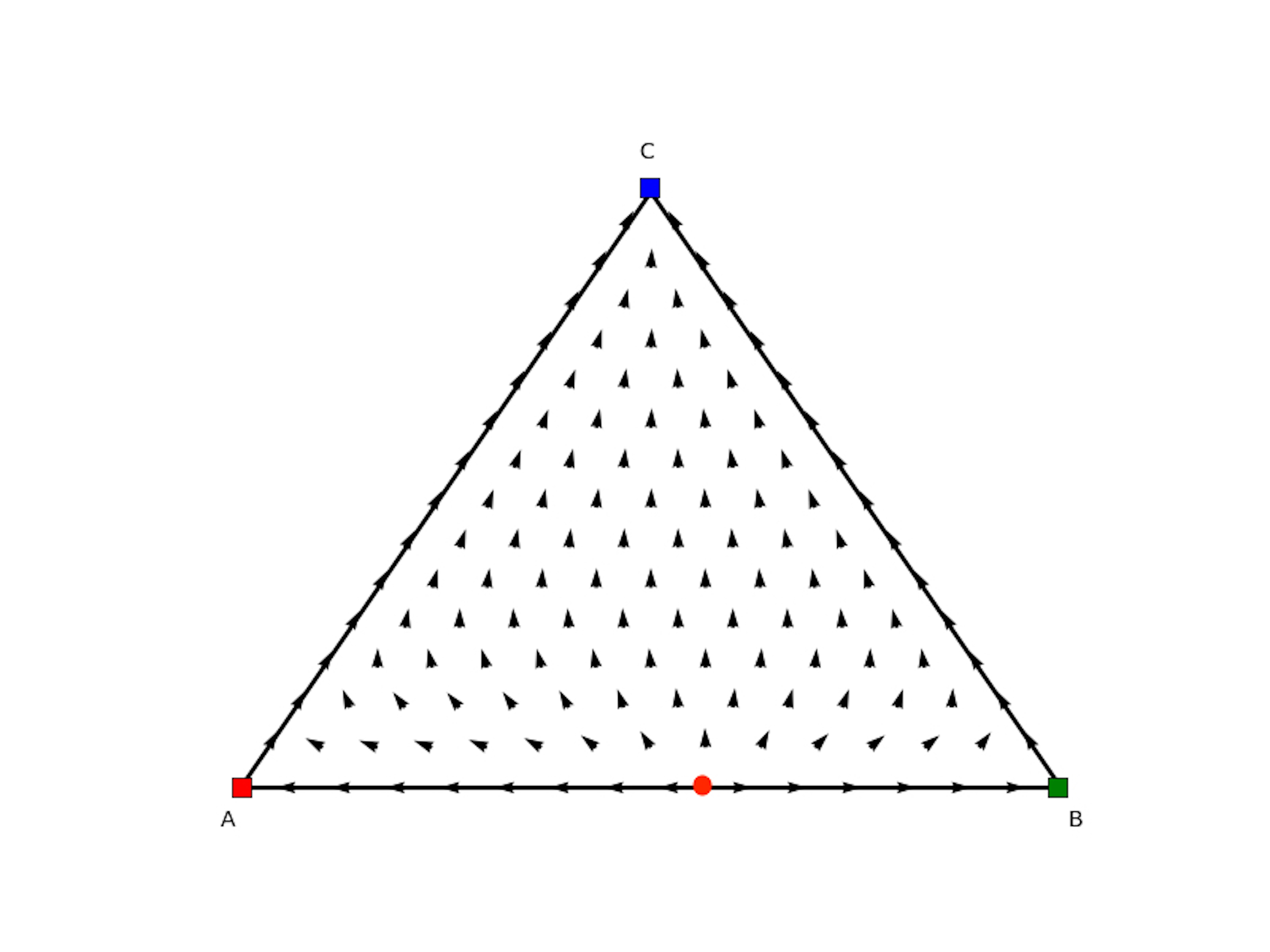}
   \caption{Rest point but not NE}
    \label{fig:rest_point}
  \end{minipage}
\end{figure}

\section*{Results}\label{sec:results}

In the following, we first present our main findings, formally relating Nash equilibria in asymmetric 2-player games with the Nash equilibria that can be found in the corresponding counterpart games. We also examine the stability properties of the corresponding rest points of the replicator dynamics in these games. Then we experimentally illustrate these findings in some canonical examples.

\subsection*{Theoretical Findings}\label{sec:findings}

In this section, we prove the following result: if $(x,y) \in NE(A,B)$ (where $x$ and $y$ have the same support), then $x \in NE(B^\top)$ and $y \in NE(A)$. In addition, we prove that the reverse is true: if $x \in NE(B^\top)$ and $y \in NE(A)$ (where $x$ and $y$ have the same support) then $(x,y) \in NE(A,B)$. We will prove this result in two steps (Theorem~\ref{the:one} and its generalization Theorem~\ref{the:two}).

The theorems introduced apply to games where both players can play the same number of actions (i.e. square games). This condition can be weakened by adding dominated strategies to the player having the smallest number of actions (see the extended Battle of the Sexes example in the experimental section). Thus, without loss of generality, the theory will focus on square games. To begin, we state an important well-known property of Nash equilibria, that has been given different names; Gintis calls it fundamental theorem of Nash equilibria~\cite{Gintis09}. For sake of completeness, we provide a proof.
\newpage
\begin{property}
\label{TheoreticalResults:prop:Nash}
Let the strategy profile $(x,y)$ be a Nash equilibrium of an asymmetric normal form
game $(A,B)$, and denote $I_z = \{i \; | \; z_i>0\}$ the support of a strategy $z$.
Then, 
\begin{align}
& z^\top A y = x^\top A y ~\textrm{ for all } z \textrm{ such that } I_z \subset I_x,   ~\textrm{ and }\\
& x^\top B z = x^\top B y ~\textrm{ for all } z \textrm{ such that } I_z \subset I_y. 
\end{align}
\end{property}

\begin{proof}
This result is widely known. We provide it as it is a basis of our theoretical results and for the sake of completeness.

If $x$ and $y$ constitute a Nash equilibrium then, by definition 
$z^\top A y \leq x^\top A y, \forall z$. Let us suppose that there exists a 
$z$ with $I_z \subset I_x$ such that $z^\top A y < x^\top A y$. 
Then there is a $i \in I_z \subset I_x$ satisfying $(A y)_i < x^\top A y$,
and we get $x^\top A y = \sum \limits_{i \in I_x} x_i (A y)_i < \sum \limits_{i \in I_x} x_i x^\top A y = x^\top A y$, which is a contradiction, proving the first claim. The claim for $B$ follows analogously.
\end{proof}

\begin{property}
\label{TheoreticalResults:prop:Nash:onepop}
Let the strategy $x$ be a Nash equilibrium of a single population game $A$.
Then, 
\begin{align}
& z^\top A x = x^\top A x ~\textrm{ for all } z \textrm{ such that } I_z \subset I_x. 
\end{align}
\end{property}
\begin{proof}
The proof is similar to the proof of Property \ref{TheoreticalResults:prop:Nash}.
\end{proof}

This property will be useful in the steps of the proofs that follow.
We now present our first main result: a correspondence between the Nash equilibria of full support in the asymmetric game with those of full support in the counterpart games. Theorem~\ref{the:two} subsumes this result and we introduce this simpler version first for the sake of readability.

\begin{theorem}\label{the:one}
If strategies $x$ and $y$ constitute a Nash equilibrium of an asymmetric normal form game $G=(S_1,S_2,A,B)$, with both $x_i>0$ and $y_j>0$ for all $i,j$ (full support), and $|S_1|=|S_2|=n$, then it holds that $x$ is a Nash equilibrium of the single population game $B^T$ and $y$ is a Nash equilibrium of the single population game $A$. The reverse is also true.
\end{theorem}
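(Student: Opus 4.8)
The plan is to exploit the fact that full support, combined with the indifference principle already established, forces the relevant payoff vectors to be constant across all coordinates; once a payoff vector is constant, every strategy earns the same expected payoff, so the Nash inequalities collapse to equalities and hold trivially. This means there is essentially no inequality to verify — full support turns the best-response conditions into indifference conditions.

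For the forward direction, I would start from $(x,y) \in NE(A,B)$ with full support and apply Property~\ref{TheoreticalResults:prop:Nash}. Since $I_x$ is the full action set, the first identity gives $z^\top A y = x^\top A y$ for every $z$; choosing $z$ to range over the pure-strategy vectors $e_i$ shows that $(Ay)_i = x^\top A y =: v$ for all $i$, i.e. $Ay$ is a constant vector. Symmetrically, full support of $y$ yields $(x^\top B)_j = x^\top B y =: w$ for all $j$. To conclude $y \in NE(A)$ I would compute, for an arbitrary $y'$, that $y'^\top A y = \sum_i y'_i (Ay)_i = v = y^\top A y$, so $y^\top A y \ge y'^\top A y$ holds with equality. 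For $x \in NE(B^\top)$ I would use the scalar identity $x'^\top B^\top x = x^\top B x'$ together with $(x^\top B)_j = w$ to obtain $x'^\top B^\top x = w = x^\top B^\top x$, again with equality.

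The reverse direction runs along the same lines but invokes the single-population indifference result, Property~\ref{TheoreticalResults:prop:Nash:onepop}. From $y \in NE(A)$ with full support I would deduce that $Ay$ is constant, and from $x \in NE(B^\top)$ with full support that $B^\top x$ (equivalently $x^\top B$) is constant. Substituting these constants into the two-population best-response conditions $x^\top A y \ge x'^\top A y$ and $x^\top B y \ge x^\top B y'$ then shows both hold with equality for every deviation $x'$ and $y'$, so $(x,y) \in NE(A,B)$.

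The only real care needed is bookkeeping with transposes: keeping straight that the single-population game on $B^\top$ makes the row vector $x^\top B$ (rather than $Bx$) the constant object, and using $x'^\top B^\top x = x^\top B x'$ to align that constant with the column-player's deviation. I expect this transpose tracking to be the main (and only) subtlety; there is no genuine analytic obstacle, since once full support is used the statement reduces to the linear-algebra fact that averaging a constant payoff vector against any probability vector simply returns that constant.
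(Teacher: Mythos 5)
Your proof is correct and follows essentially the same route as the paper: full support plus Property~\ref{TheoreticalResults:prop:Nash} forces $Ay$ and $x^\top B$ to be constant vectors, which collapses every best-response inequality to an equality, and the transpose bookkeeping $x'^\top B^\top x = x^\top B x'$ is handled exactly as the paper does. The only difference is that you write out the reverse direction explicitly via Property~\ref{TheoreticalResults:prop:Nash:onepop}, whereas the paper's proof of Theorem~\ref{the:one} defers that direction to Theorem~\ref{the:two}; the underlying mechanics are identical.
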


\begin{proof}
This result follows naturally from Property~\ref{TheoreticalResults:prop:Nash} and is implied by Theorem~\ref{the:two}. 

We start by assuming that $x$ and $y$ constitute a full support Nash equilibrium of the asymmetric game $(A,B)$. By Property~\ref{TheoreticalResults:prop:Nash} and since $x$ and $y$ have full support, we know that:
\begin{equation*}
    Ay = (1,...,1){^T}\ \max_{i \in \{1,...,n\}}\ (Ay)_i \quad \text{and,} \quad  x^TB = (1,...,1)\ \max_{i \in \{1,...,n\}}\ (x^TB)_i
\end{equation*}
From this we also know that $y^TAy$ = $(Ay)_i$ (since the $(Ay)_i$ are equal for all $i$'s in the vector $Ay$, so multiplying $Ay$ with $y^T$ will yield the same number $\max_i\ (Ay)_i$), and similarly $(x^TB)_i$ = $x^TBx$ (and thus $(B^T x)_i$ = $x^TB^Tx$), implying that:
\begin{equation*}
    \forall y', y^T A y = y'^T A y \quad \text{and,} \quad \forall x', x^T B^T x = x'^T B^T x
\end{equation*}

This concludes the proof.
\end{proof}

For the first counterpart game this means that the players will use the \emph{$y$} part of the Nash equilibrium of player 2 of the original asymmetric game, in the symmetric counterpart game determined by payoff table $A$. And similarly, for the second counterpart game this means that players will play according to the \emph{$x$} part of the Nash equilibrium of player 1 of the original asymmetric game, in the symmetric game determined by payoff table $B$. As such both players consider a symmetric version of the asymmetric game, for which this $y$ component and $x$ component constitute a Nash equilibrium in the two new respective symmetric games.

In essence, these two symmetric counterpart games can be considered as a decomposition of the original asymmetric game, which gives us a means to illustrate in a smaller strategy space where the mixed and pure equilibria are located. 

A direct consequence of Theorem \ref{the:one} is the following corollary that gives insights on the geometrical structure of Nash equilibrium,
\begin{corollary}
Combinations of Nash equilibria of full support of the games corresponding to the symmetrical counterparts of the original asymmetric game also form Nash equilibria of full support in this asymmetric game.
\end{corollary}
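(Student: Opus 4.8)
The plan is to establish the corollary as exactly the reverse implication of Theorem~\ref{the:one}, proved directly from the single-population fundamental theorem, Property~\ref{TheoreticalResults:prop:Nash:onepop}. I would start from an arbitrary pair consisting of one full-support equilibrium $y \in NE(A)$ of the first counterpart game and one full-support equilibrium $x \in NE(B^\top)$ of the second, and show that the combined profile $(x,y)$ satisfies both Nash inequalities of the asymmetric game $(A,B)$, with full support being inherited automatically.

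The engine of the argument is that full support makes the support condition in Property~\ref{TheoreticalResults:prop:Nash:onepop} vacuous. Since $y$ has full support, $I_y=\{1,\dots,n\}$, so $I_z \subset I_y$ for every $z$; the property then gives $z^\top A y = y^\top A y$ for all $z$, and choosing $z=e_i$ for each unit vector yields $(Ay)_i = y^\top A y$ for all $i$. Hence $Ay = (y^\top A y)\,\mathbf{1}$ is a constant vector. Applying the identical reasoning to the single-population game $B^\top$ and its full-support equilibrium $x$ gives $(B^\top x)_i = x^\top B^\top x$ for every $i$, i.e. $B^\top x = (x^\top B x)\,\mathbf{1}$, equivalently $x^\top B = (x^\top B x)\,\mathbf{1}^\top$. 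With these two constant-vector facts in hand the Nash conditions are immediate: for any $x' \in \Delta S_1$ we get $x'^\top A y = (y^\top A y)(x'^\top \mathbf{1}) = y^\top A y = x^\top A y$, so $x^\top A y \ge x'^\top A y$ holds with equality; symmetrically $x^\top B y' = (x^\top B x)(\mathbf{1}^\top y') = x^\top B x = x^\top B y$ for any $y' \in \Delta S_2$, so the second inequality holds with equality as well. Thus $(x,y) \in NE(A,B)$, and since both $x$ and $y$ have full support the equilibrium does too.

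The step I expect to demand the most care is the transpose/index bookkeeping that links the single-population counterpart $B^\top$ to player~2's payoff in the coupled game: I must track that $x^\top B = (B^\top x)^\top$ componentwise and that the scalar $x^\top B^\top x$ coincides with $x^\top B x$, so that the constant value extracted from the counterpart equilibrium is precisely the value appearing in player~2's best-response condition. The conceptual content, by contrast, is light --- full support forces every pure strategy to be a best reply, the payoff vector is therefore constant, and every unilateral deviation is payoff-equivalent --- so once the support hypothesis is seen to trivialise Property~\ref{TheoreticalResults:prop:Nash:onepop}, both inequalities follow with equality and any pairwise combination of counterpart equilibria works.
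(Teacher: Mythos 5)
Your proof is correct, and it fills in exactly what the paper compresses into a citation: the paper's own proof of this corollary is the single line that it is a direct consequence of Theorem~\ref{the:one}, where the relevant content is that theorem's reverse implication, which the paper in turn defers to the same-support argument of Theorem~\ref{the:two}. What you do differently is inline that chain into a self-contained argument: you apply Property~\ref{TheoreticalResults:prop:Nash:onepop} with the support condition made vacuous by full support, test against unit vectors to obtain the constant-payoff-vector identities $Ay=(y^\top A y)\,\mathbf{1}$ and $x^\top B=(x^\top B x)\,\mathbf{1}^\top$, and read off both Nash inequalities of $(A,B)$ as equalities. This is the same engine the paper uses internally --- the forward proof of Theorem~\ref{the:one} derives the identical constant-vector form of $Ay$ and $x^\top B$, and the reverse direction of Theorem~\ref{the:two} invokes Property~\ref{TheoreticalResults:prop:Nash:onepop} with particular choices of $z$ and $z'$ --- so the difference is presentational rather than conceptual. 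Still, your route buys two things: the corollary no longer leans on the later and more general Theorem~\ref{the:two}, and you establish the slightly stronger fact that at any such combined profile every unilateral deviation is payoff-\emph{equivalent} (an equalizer equilibrium), which makes transparent why arbitrary pairings of full-support counterpart equilibria work. Your transpose bookkeeping is also sound: $(B^\top x)_i=(x^\top B)_i$ componentwise and $x^\top B^\top x=x^\top B x$ as scalars, so the constant extracted from the counterpart game $B^\top$ is precisely the value in player 2's best-response condition.
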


\begin{proof}
This is a direct consequence of Theorem \ref{the:one}.
\end{proof}

The next theorem explores the case where the equilibrium is not of full support. We prove that the theorem stands if the strategies of both players have the same support. Indeed, the first theorem requires that both players play all actions with a positive probability, here we will only require that they play the actions with the same index with a positive probability. We say that $x$ and $y$ have the same support if the set of played actions $I_x = \{i \; | \; x_i>0\}$ and $I_y = \{i \; | \; y_i>0\}$ are equal.

\begin{theorem}\label{the:two}
Strategies $x$ and $y$ constitute a Nash equilibrium of an asymmetric game $G = (S_1, S_2, A,B)$ with the same support (i.e. $I_x=I_y$) if and only if $x$ is a Nash equilibrium of the single population game $B^T$, $y$ is a Nash equilibrium of the single population game $A$ and $I_x = I_y$.

%
%

\end{theorem}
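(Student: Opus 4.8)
The plan is to prove both directions of the equivalence by reformulating each Nash condition as a statement that the relevant payoff vector is constant on the support and dominated off it, and then to use the hypothesis $I_x = I_y$ as the bridge that makes these two descriptions line up. Throughout I would use the standard fact that a best-response condition against a fixed opponent strategy is equivalent to checking it against every pure strategy $e_i$, which lets me pass freely between the inequalities $x^\top A y \ge x'^\top A y$ and the coordinatewise statements about $Ay$ (and symmetrically for $B^\top x$).

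For the forward direction, suppose $(x,y) \in NE(A,B)$ with $I_x = I_y$. From Property~\ref{TheoreticalResults:prop:Nash} together with the pure-strategy best-response characterisation I would record two facts: $(Ay)_i = x^\top A y$ for every $i \in I_x$ while $(Ay)_i \le x^\top A y$ for every $i$; and symmetrically $(B^\top x)_j = x^\top B y$ for every $j \in I_y$ while $(B^\top x)_j \le x^\top B y$ for every $j$. Now I invoke $I_x = I_y$: since the support of $y$ equals $I_x$, the vector $Ay$ is constant (equal to $x^\top A y$) on the support of $y$ and no larger elsewhere, which is exactly the single-population Nash condition for $y$ — first computing $y^\top A y = x^\top A y$, then bounding $y'^\top A y \le x^\top A y$ for arbitrary $y'$ — so $y \in NE(A)$. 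The identical argument applied to $B^\top x$, using that the support of $x$ equals $I_y$, yields $x \in NE(B^\top)$.

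For the reverse direction, suppose $y \in NE(A)$, $x \in NE(B^\top)$ and $I_x = I_y$. Here I would apply Property~\ref{TheoreticalResults:prop:Nash:onepop} to obtain $(Ay)_i = y^\top A y$ for $i \in I_y$ with $(Ay)_i \le y^\top A y$ for all $i$, and analogously $(B^\top x)_j = x^\top B^\top x$ for $j \in I_x$ with domination off the support. To recover the coupled equilibrium I check player~1's best response to $y$: because $I_x = I_y$, the mass placed by $x$ sits exactly where $Ay$ attains its maximum, so $x^\top A y = y^\top A y$ and $x'^\top A y \le y^\top A y$ for every $x'$, giving the first Nash inequality; the symmetric computation on $B^\top x$ gives player~2's condition, so $(x,y) \in NE(A,B)$.

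The main obstacle, and the only place the hypothesis genuinely bites, is this support-matching step. The asymmetric equilibrium forces $Ay$ to be constant only on $I_x$, whereas the single-population claim about $y$ concerns its \emph{own} support $I_y$; without $I_x = I_y$ there could be an action in $I_y \setminus I_x$ on which $Ay$ is strictly suboptimal, so $y$ would place positive mass on a self-dominated action and fail to be a single-population equilibrium. I would therefore state explicitly at each transfer that the equal-support assumption is precisely what permits replacing ``support of $x$'' by ``support of $y$'' (and conversely) — the ingredient that Theorem~\ref{the:one} obtained for free from the full-support assumption.
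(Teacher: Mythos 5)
Your proposal is correct and follows essentially the same route as the paper's proof: both directions hinge on Property~\ref{TheoreticalResults:prop:Nash} (resp.\ Property~\ref{TheoreticalResults:prop:Nash:onepop}) to make the relevant payoff vector constant on the support, use $I_x = I_y$ to extract the pivotal identities $y^\top A y = x^\top A y$ and $x^\top B x = x^\top B y$, and then transfer the Nash inequalities between the coupled game and the single-population games. Your coordinatewise formulation via pure-strategy best responses (constancy of $Ay$ and $B^\top x$ on the common support, domination off it) is merely an unpacked version of the paper's direct manipulation of the mixed-strategy inequalities, not a different argument.
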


\begin{proof}
We start by assuming that $x$ and $y$ constitute a Nash equilibrium of same support ($I_x = I_y$) of the asymmetric game $(A,B)$. By Property~\ref{TheoreticalResults:prop:Nash}, and since $x$ and $y$ have the same support, we know that:
\begin{align}
& z^\top A y = x^\top A y ~\textrm{ for all } z \textrm{ such that } I_z \subset I_x,   ~\textrm{ and }\\
& x^\top B z' = x^\top B y ~\textrm{ for all } z' \textrm{ such that } I_{z'} \subset I_y. 
\end{align}
implying that $y^\top A y = x^\top A y$ and $x^\top B x = x^\top B y$ (by setting $z=y$ and $z'=x$). Then, from the Nash equilibrium condition we can write:\\

$\forall x' \in \Delta S_1,\; x^T A y \geq x'^T A y$ and $\forall y' \in \Delta S_2,\;  x^T B y \geq x^T B y'$

$\forall y' \in \Delta S_2,\; y^\top A y \geq y'^T A y$ and $\forall x' \in \Delta S_1,\;  x^\top B x \geq x^T B x'$

$\forall y' \in \Delta S_2,\; y^\top A y \geq y'^T A y$ and $\forall x' \in \Delta S_1,\;  x^\top B^\top x \geq x'^T B^\top x$
\\
\\which implies that $y$ is a Nash equilibrium of $B^\top$ and $x$ is a Nash equilibrium of $A$.\newline

The proof of the other direction follows similar mechanics and uses Property~\ref{TheoreticalResults:prop:Nash:onepop}. Let us now assume that $y$ is a Nash equilibrium of $B^\top$ and $x$ is a Nash equilibrium of $A$ with $I_x = I_y$. Then, from Property~\ref{TheoreticalResults:prop:Nash:onepop} we have:
\begin{align}
& z^\top A y = y^\top A y ~\textrm{ for all } z \textrm{ such that } I_z \subset I_y,   ~\textrm{ and }\\
& z'^\top B^\top x = x^\top B^\top x ~\textrm{ for all } z' \textrm{ such that } I_{z'} \subset I_x. 
\end{align}
In particular we get $y^\top A y = x^\top A y$ and $x^\top B x = x^\top B y$ (by setting $z=x$ and $z'=y$). From the Nash equilibrium condition of the single population games we can write:\\

$\forall y' \in \Delta S_2,\; y^\top A y \geq y'^T A y$ and $\forall x' \in \Delta S_1,\;  x^\top B^\top x \geq x'^T B^\top x$

$\forall y' \in \Delta S_2,\; y^\top A y \geq y'^T A y$ and $\forall x' \in \Delta S_1,\;  x^\top B x \geq x^T B x'$

$\forall x' \in \Delta S_1,\; x^T A y \geq x'^T A y$ and $\forall y' \in \Delta S_2,\;  x^T B y \geq x^T B y'$
\\
\\which concludes the proof.

\end{proof}

\begin{corollary}\label{the:three}
Strategies $x$ and $y$ constitute a pure (strict) Nash equilibrium of an asymmetric normal form game $G = (S_1, S_2, A,B)$, with support on the strategy with the same index in their respective strategy sets $S_1$ and $S_2$, if and only if, 
$y$ and $x$ are also pure (strict) Nash equilibria of the counterpart games defined by A,
\begin{equation}
 \frac{dy_i}{dt}=y_i((Ay)_i - y^TAy)=0 \label{ESSDecoupledAth3}
\end{equation}
and B,
\begin{equation}
\frac{dx_i}{dt}=x_i((x^TB)_i-x^TBx)=0 \label{ESSDecoupledBth3}
\end{equation}
\end{corollary}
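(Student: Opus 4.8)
The plan is to read this corollary as the pure-strategy specialization of Theorem~\ref{the:two}, proving the plain ``pure Nash'' equivalence by direct appeal to that theorem and then handling the extra \emph{strict} claim and the rest-point conditions by unpacking definitions. First I would observe that a pure profile supported on the common index $i$ is simply $x = e_i$ and $y = e_i$, where $e_i$ denotes the $i$-th unit vector (the simplex vertex placing all mass on action $i$). Then $I_x = I_y = \{i\}$, so the hypothesis $I_x = I_y$ of Theorem~\ref{the:two} is automatically satisfied, and the non-strict equivalence follows at once: $(e_i,e_i) \in NE(A,B)$ iff $e_i \in NE(A)$ and $e_i \in NE(B^\top)$.

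The remaining content is the strictness correspondence, which I would establish by translating each Nash condition into inequalities between individual matrix entries. Strictness of $(e_i,e_i)$ for player~1 in the asymmetric game means $e_i^\top A e_i > x'^\top A e_i$ for every $x' \neq e_i$; testing the pure deviations $x' = e_j$ shows this is equivalent to $a_{ii} > a_{ji}$ for all $j \neq i$, i.e. the $i$-th column of $A$ attains its maximum strictly at row $i$. The identical computation shows that $e_i$ is a strict Nash equilibrium of the single-population game $A$ precisely when $e_i^\top A e_i > e_j^\top A e_i$ for all $j \neq i$, which is the same condition. Hence player~1 strictness in $(A,B)$ holds iff $y = e_i$ is a strict equilibrium of the counterpart game $A$.

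Symmetrically, strictness for player~2 means $e_i^\top B e_i > e_i^\top B e_j$ for all $j \neq i$, i.e. $b_{ii} > b_{ij}$, so the $i$-th row of $B$ is strictly maximal at column $i$; and $e_i$ is a strict equilibrium of $B^\top$ iff $(B^\top)_{ii} > (B^\top)_{ji}$, i.e. again $b_{ii} > b_{ij}$ for all $j \neq i$, since passing to $B^\top$ converts the relevant row of $B$ into the relevant column of $B^\top$. Combining the two players yields the full strict equivalence. Finally, the displayed replicator equations vanishing at $x = e_i$ and $y = e_i$ is automatic, because every vertex of the simplex is a rest point of the replicator dynamics; these ``$=0$'' conditions therefore impose no further constraint and merely record that the pure strategies are rest points of the counterpart dynamics. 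I expect no genuine obstacle here: the only point demanding care is the index bookkeeping — that player~1's deviations probe the $i$-th \emph{column} of $A$ while player~2's probe the $i$-th \emph{row} of $B$, and that the transpose in $B^\top$ is precisely what realigns the latter with a single-population condition.
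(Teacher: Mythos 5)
Your proposal is correct and follows essentially the same route as the paper, whose entire proof is the one-line observation that the corollary is a direct consequence of Theorem~\ref{the:two}; your reduction via $x = y = e_i$ with $I_x = I_y = \{i\}$ is exactly that consequence made explicit. Your additional entrywise verification of the strictness correspondence ($a_{ii} > a_{ji}$ and $b_{ii} > b_{ij}$ for all $j \neq i$, with the transpose realigning player~2's row condition into a single-population column condition) and the remark that simplex vertices are automatically rest points of the replicator dynamics correctly fill in details that the paper's one-line proof leaves implicit — indeed, the strict case is not literally covered by Theorem~\ref{the:two}, which concerns only (non-strict) Nash equilibria, so your explicit argument closes a small gap rather than merely elaborating.
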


\begin{proof}
This is a direct consequence of Theorem~\ref{the:two}.
\end{proof}

The theorems can only be used for equilibria in the counterpart games with matching supports ($I_x = I_y$) from both players. One can work around this condition though by simply permuting the actions of one player in matrix $A$ and $B$ to study all configurations of supports of the same cardinality. To be precise, we need to analyze all the counterpart games defined by $A_\Sigma = A \Sigma$ and $B_\Sigma^T = (B \Sigma)^T$ for all permutation matrices $\Sigma$.
This technique is sufficient to study non-degenerate games, as in a non-degenerate game all Nash equilibria have a support of same size (in a non-degenerate game if $(x,y)$ is a Nash equilibrium then $|I_x|=|I_y|$\cite{VonStengel20021723}).

\subsubsection*{Stability Analysis}
We can now examine the stability of the pure Nash equilibria discussed in the previously derived theorems.

\begin{corollary}
Strategy $y$ is a strict Nash equilibrium of the first counterpart game defined by A and strategy $x$ is a strict Nash equilibrium of the second counterpart game defined by B, if and only if,\\ $(x,y)$ is a locally asymptotically stable equilibrium and a two-species ESS of the asymmetric normal form game $G = (S_1, S_2, A, B)$ with support on the strategy with the same index in their respective strategy sets $S_1$ and $S_2$.
\end{corollary}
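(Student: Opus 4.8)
The plan is to reduce this two-sided statement to two ingredients that are already in hand: the static decomposition of Corollary~\ref{the:three}, and the classical dynamical characterisation of stability and evolutionary stability for bimatrix replicator dynamics. The key observation is that \emph{strict Nash equilibrium of} $(A,B)$ is the common hinge linking both sides of the biconditional.

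First I would collapse the right-hand side to a single static condition. For the coupled two-population dynamics~\eqref{eq:P1rd} it is a classical fact that a rest point $(x,y)$ is locally asymptotically stable if and only if it is a strict Nash equilibrium of the bimatrix game $(A,B)$~\cite{Hofbauer98,Weibull97}, and Selten's analysis of truly asymmetric contests shows that the two-species ESS are exactly these strict (hence necessarily pure) Nash equilibria~\cite{Selten80,Cressman03}. Thus ``$(x,y)$ is locally asymptotically stable and a two-species ESS'' is equivalent to the single condition ``$(x,y)$ is a pure strict Nash equilibrium of $(A,B)$ with support on strategies of matching index.''

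Second I would invoke Corollary~\ref{the:three} (the strict specialisation of Theorem~\ref{the:two}), which states precisely that $(x,y)$ is a pure strict Nash equilibrium of $(A,B)$ with matching-index support if and only if $y$ is a pure strict Nash equilibrium of the first counterpart game~\eqref{ESSDecoupledAth3} defined by $A$ and $x$ is a pure strict Nash equilibrium of the second counterpart game~\eqref{ESSDecoupledBth3} defined by $B$. Chaining this equivalence with the collapse from the first step delivers the claimed biconditional in both directions simultaneously.

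The main obstacle is not the logical chaining but pinning down the classical facts with the right level of precision: one must verify that \emph{strictness} (and not merely the pure-equilibrium property) is transported faithfully through the decomposition by Corollary~\ref{the:three}, and that strict Nash of the bimatrix game is genuinely \emph{equivalent} to asymptotic stability under~\eqref{eq:P1rd} rather than merely implied by it. For completeness I would also record that a strict single-population Nash equilibrium is an ESS and hence an attractor of its own simplex dynamics~\cite{Weibull97,Cressman03}, so that each counterpart equilibrium can itself be read off as asymptotically stable from the lower-dimensional trajectory plots; this is what makes the corollary practically useful for inferring stability of $(x,y)$ from the two decoupled games.
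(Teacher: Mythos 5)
Your proposal is correct and takes essentially the same route as the paper: the paper also derives the statement as a direct consequence of Corollary~\ref{the:three} combined with the classical equivalence between strict Nash equilibria of $(A,B)$ and two-species ESS~\cite{Hofbauer98,Selten80,Cressman14}. The only difference is that you make explicit the equivalence between strict Nash and local asymptotic stability under the coupled dynamics~\eqref{eq:P1rd}, a step the paper leaves implicit in its citations.
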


\begin{proof}
This a direct consequence of Corollary \ref{the:three}. More specifically, from Corollary \ref{the:three} we know that $(x,y)$ is a strict Nash equilibrium of $G$.
It has been shown that $(x,y)$ is a strict Nash equilibrium of $G$ iff it is a two-species ESS~\cite{Hofbauer98,Selten80,Cressman14}. 
\end{proof}

\subsection*{Experimental illustration}
We will now illustrate how the theoretical links between asymmetric games and their counterpart symmetric replicator dynamics facilitate analysis of asymmetric multiagent games, and provide a convenient tool to get insight into their equilibrium landscape. We do this for several examples. The first example concerns the Battle of the Sexes game to illustrate the intuition behind the results. The second example extends the Battle of the Sexes game with one strategy for one of the players, illustrating the permutation argument of the theorems and how to apply the results in case of a non-square game. The third example is a bimatrix game generated in the context of a multiagent learning algorithm called PSRO (Policy Space Response Oracles~\cite{Lanctot17}) and concerns Leduc Poker. This algorithm produces normal-form ``empirical games'' which each correspond to an extensive-form game with a reduced strategy space, using incremental best response learning. Finally, the last asymmetric game illustrates the theorems for a single mixed equilibrium of full support, while its counterpart games have many more equilibria.


A fundamental complexity arises when using the evolutionary dynamics of a 2-player asymmetric game to analyse its equilibrium structure, as the dynamics for the two players is intrinsically coupled and high-dimensional. While one could fix a player's strategy and consider the induced dynamics for the other player in its respective strategy simplex, a static trajectory plot of this would not faithfully represent the complexity of the full 2-player dynamics. To gain a somewhat more complete intuitive picture, one can represent this dynamics as a movie, showing the change in induced dynamics for one player, as one varies the (fixed) strategy for the other (we will illustrate this in the PSRO-produced game on Leduc Poker).

The theorems introduced in the previous section help to overcome this problem, and allow to analyse the evolutionary dynamics of the symmetric counterpart games instead of the asymmetric game itself, revealing the landscape of Nash equilibria, which seriously simplifies the analysis.


\subsubsection*{Battle of the Sexes}

Symmetry assumes that strategy sets and corresponding payoffs are the same for all players in the interaction. An example of an asymmetric game is the Battle of the Sexes (BoS) game illustrated in Table \ref{fig:BoS}. In this game both players do have the same strategy set, i.e., go to the \textit{opera} or go to the \textit{movies}, however, the corresponding payoffs for each are different, expressing the difference in preferences that both players have over their choices.

\noindent The Battle of the Sexes has two pure Nash equilibria, which are ESS as well (located at coordinates $(0,0)$ and $(1,1)$), and one unstable completely mixed Nash equilibrium in which the players play respectively $x=(\frac{3}{5},\frac{2}{5})$ and $y=(\frac{2}{5},\frac{3}{5})$. Figure \ref{fig:dfieldBoS} illustrates the two-player evolutionary dynamics using the replicator equations, in which the x-axis corresponds to the probability with which player 1 plays $O$ (Opera), and the y-axis corresponds to the probability with which the 2nd player plays $O$ (Opera). The blue arrows show the vector field and the black lines are the corresponding trajectories. Note that it is still possible here to capture all of the dynamics in a static plot for the case of 2-player 2-action games, but is generally not possible in games with more than two actions.

\begin{figure}[h!]
    \centering
    \includegraphics[width=8cm]{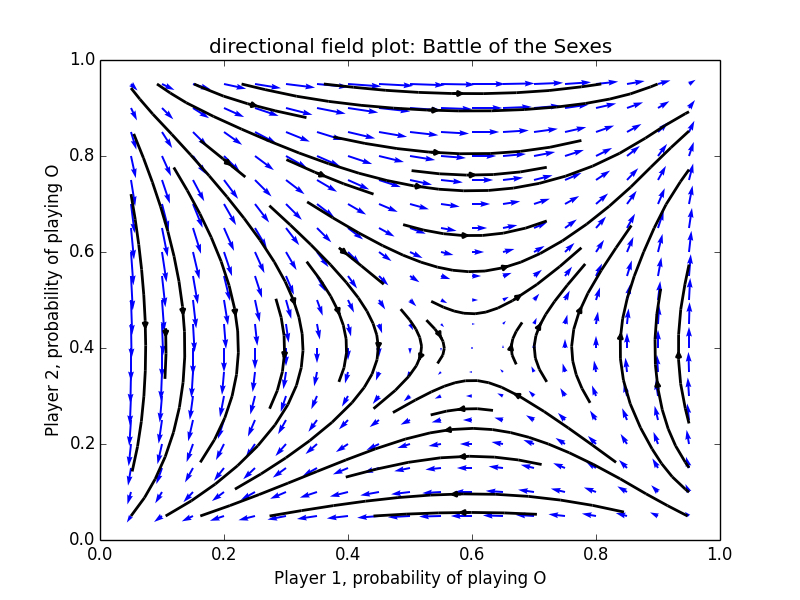}
    \caption{Legend. Directional field plot of the Battle of the Sexes game.}
    \label{fig:dfieldBoS}
\end{figure}

We now use this game to illustrate Theorem \ref{the:one}. 
If we apply Theorem \ref{the:one} we know that the first and second counterpart symmetric games can be described by the payoff tables shown in Table \ref{fig:BoScounterpart}. The first counterpart game has $((\frac{2}{5},\frac{3}{5}),(\frac{2}{5},\frac{3}{5}))$ as a mixed Nash equilibrium, and the second counterpart game has $((\frac{3}{5},\frac{2}{5}),(\frac{3}{5},\frac{2}{5}))$ as a mixed Nash equilibrium.

\begin{table}[h!]
	\centering
\begin{minipage}{.2\textwidth}
   \begin{game}{2}{2}[][]
   	    &  O     &  M     \\
   	 O  &    $3$      & $0$  \\
   	 M &  $0$ & $2$\\
   \end{game}
\end{minipage}
\begin{minipage}{.2\textwidth}
   \begin{game}{2}{2}[][]
   	    &  O      &  M     \\
   	 O  &    $2$      & $0$  \\
   	 M &  $0$ & $3$\\
   \end{game}
\end{minipage}
\caption{Counterpart matrix game 1 and 2 for the Battle of the Sexes game.}
\label{fig:BoScounterpart}
\end{table}

\begin{figure}[ht]
    \centering
    \includegraphics[width=13cm]{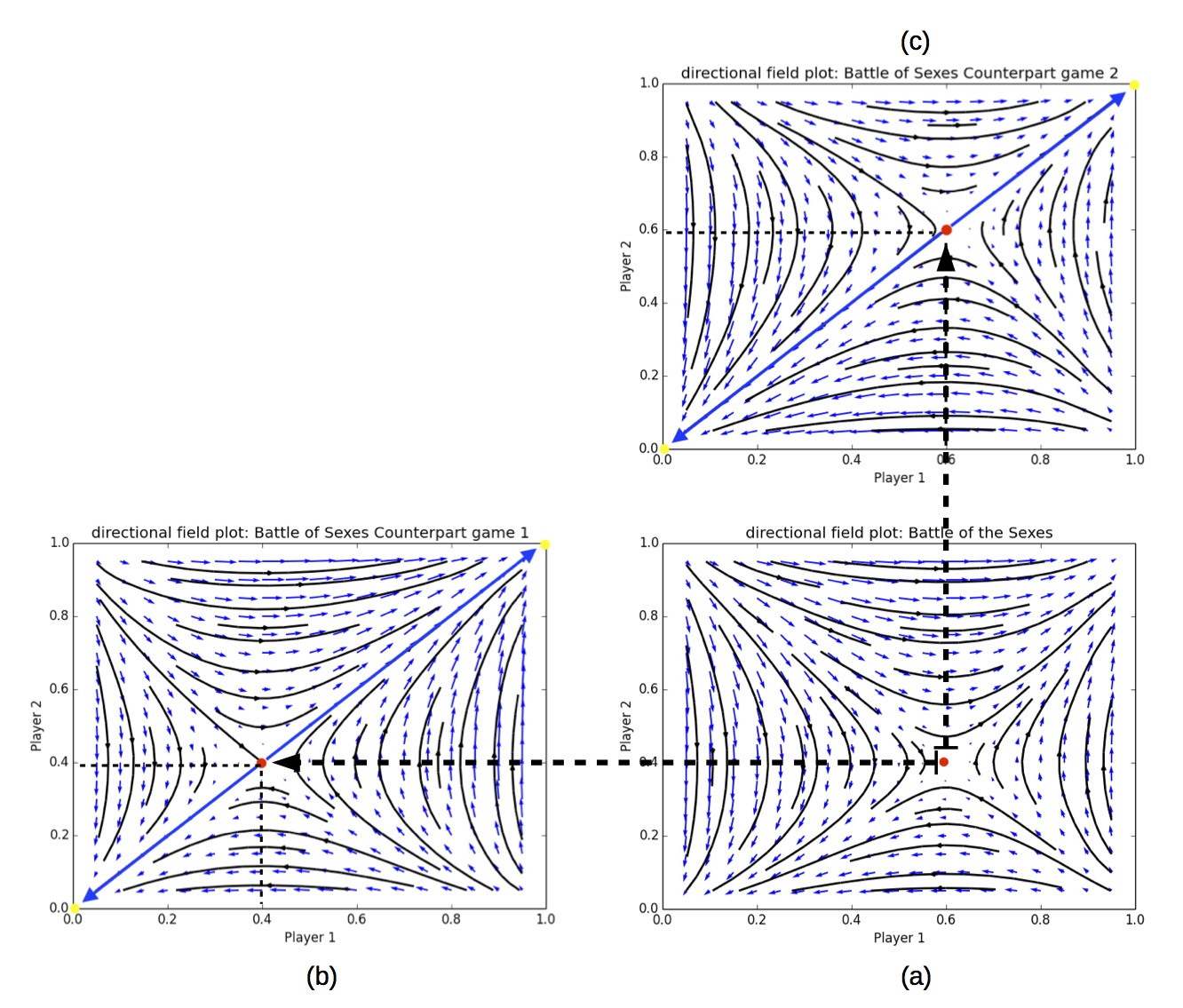}
    \caption{This plot shows a visual representation of how the mixed Nash equilibrium is decomposed into Nash equilibria in both counterpart games. (a) shows the directional field plot of the Battle of the Sexes game. (b) illustrates how the y-component of the asymmetric Nash equilibrium becomes a Nash equilibrium in the first counterpart game, and (c) shows how the x-component of the asymmetric Nash equilibrium becomes a Nash equilibrium in the first counterpart game.}
    \label{fig:dfieldBoS-composed}
\end{figure}

 In Figures \ref{fig:dfieldBoS-composed}(b) and \ref{fig:dfieldBoS-composed}(c) we show the evolutionary dynamics of both counterpart games, from which the respective equilibria can be observed, as predicted by Theorem \ref{the:one}. 

 Additionally, we also know that the reverse holds, i.e., if we were given the symmetric counterpart games, we would know that $((\frac{3}{5},\frac{2}{5}),(\frac{2}{5},\frac{3}{5}))$ would also be a mixed Nash equilibrium of the original asymmetric BoS. In this case we can combine the mixed Nash equilibria of both counterpart games into the mixed Nash equilibrium of the original asymmetric game, as prescribed by Theorem \ref{the:one}. Specifically, as $y=(\frac{2}{5},\frac{3}{5})$ is part of the Nash equilibrium in the first counterpart game and $x=(\frac{3}{5},\frac{2}{5})$ in the second counterpart game, we can combine them into $(x=(\frac{3}{5},\frac{2}{5}),y=(\frac{2}{5},\frac{3}{5}))$, which is a mixed Nash equilibrium of full support of the asymmetric Battle of the Sexes game.

 If we now apply Theorem \ref{the:two} to the Battle of the Sexes game, then we find that pure strategy Nash equilibria $x=(1,0)$ (and $y=(1,0)$ for the second counterpart) and $x=(0,1)$ (and $y=(0,1)$ for the second counterpart), which are both ESS, are also Nash equilibria in the counterpart games shown in Tables \ref{fig:BoScounterpart}. Also here the reverse holds, i.e., if we know the counterpart games, and we observe that $x=(1,0)$ and $x=(0,1)$ ($y=(1,0)$ and $y=(0,1)$ for the other counterpart of the game) are Nash in both games, we know that $x=y=(1,0)$ and $x=y=(0,1)$ are also Nash in the original asymmetric game. This can also be observed in Figures \ref{fig:dfieldBoS-composed}(b), \ref{fig:dfieldBoS-composed}(c) and \ref{fig:dfieldBoS-composed}(a). Specifically, the pure Nash equilibria are situated at coordinates $(0,0)$ and $(1,1)$ in Figures \ref{fig:dfieldBoS-composed}(b) and \ref{fig:dfieldBoS-composed}(c). Furthermore, it is important to understand that the counterpart dynamics are visualised only on the diagonal from coordinates $(0,0)$ to $(1,1)$, as that is where both players play with the same strategy distribution over their respective actions.

\subsubsection*{Extended Battle of the Sexes game}
\label{BoS:Ambiguity}
In order to illustrate the theorems in a game that is non-square, including permutation of strategies, we extend the Battle of the Sexes game with a third strategy.
Specifically, we give the second player a third strategy $R$ in which she can choose to listen to a concert on the radio instead of going to the opera or movies with her partner. 
This game is illustrated in Table \ref{fig:BoS-extended}.

\begin{table}[h!]
	\centering
   \begin{game}{2}{3}[][]
   	    &  O     &  R  & M   \\
   	 O  &    $3,2$      & $0.5,0.5$  & $0,0$\\
   	 M &  $0,0$ & $0.5,0.55$ & $2,3$\\
   \end{game}
\caption{Extended Battle of the Sexes game.}
\label{fig:BoS-extended}
\end{table}


If we would like to carry out a similar evolutionary analysis as before we need two populations for the asymmetric replicator equations. Note that in this case the strategy sets of both players are different. Using the asymmetric replicator dynamics to plot the evolutionary dynamics quickly becomes complicated since the full dynamical picture is high-dimensional and not faithfully represented by projections to the respective player's individual strategy simplices. In other words, a static plot of the dynamics for one player does not immediately allow conclusions about equilibria, as it only describes a player's strategy evolution assuming a fixed (rather than dynamically evolving) strategy of the other player. Again we can apply the counterpart RD theorems here to remedy this problem and consequently analyse the equilibrium structure in the symmetric counterpart games instead, yielding insight into the equilibrium landscape of the asymmetric game.

 In Tables \ref{fig:CP-extBoS1} and \ref{fig:CP-extBoS2} we show the counterpart games A and B. Note that we introduce a \emph{dummy} action $D$ for the first player, in order to make sure that both players have the same number of actions in their strategy set (a requirement to apply the theorems) by just adding $-1$ for both players playing this strategy, which makes $D$ completely dominated and thus redundant.

\begin{table}[h!]
	\centering
\begin{minipage}[c]{.48\textwidth}
   \begin{game}{3}{3}[][]
   	    &  O     &  R   & M  \\
   	 O  &    $3$      & $0.5$ & $0$\\
   	 M &  $0$ & $0.5$ & $2$\\
   	 D & $-1$ & $-1$ & $-1$\\
   \end{game}
   \caption{Payoff matrix for the 1st counterpart game of the Extended BoS game. Strategy $D$ is added to make the matrix completely square.}
   \label{fig:CP-extBoS1}
\end{minipage}%
\hfill
\begin{minipage}[c]{.48\textwidth}
   \begin{game}{3}{3}[][]
   	  &  O     &  R   & M  \\
   	 O  &    $2$      & $0.5$ & $0$\\
   	 M &  $0$ & $0.55$ & $3$\\
   	 D & $-1$ & $-1$ & $-1$\\
   \end{game}
   \caption{Payoff matrix for the 2nd counterpart game of the Extended BoS game. Strategy $D$ is added to make the matrix completely square.}
   \label{fig:CP-extBoS2}
\end{minipage}
\end{table}

The three Nash equilibria of interest of this asymmetric game are the following, $\{(x=(0.6,0.4,0),y=(0.4,0,0.6)),(x=(0,1,0),y=(0,0,1)),(x=(1,0,0),y=(1,0,0)))\}$ (we use the online banach solver \url{http://banach.lse.ac.uk/} to check that the Nash equilibria we find are correct~\cite{Avis10}). 

We now look for the $y$ and $x$ parts of these equilibria in the counterpart games.
In Figure \ref{fig:dfieldextBoSCP1} we show the evolutionary dynamics of the first counterpart game and in Figure \ref{fig:dfieldextBoSCP2} the evolutionary dynamics of the second counterpart game. In the first counterpart we only need to consider the 1-face formed by strategies $O$ and $M$ as the third strategy is our dummy strategy. In this game there are two Nash equilibria, i.e., $(1,0,0)$ (stable, yellow oval) and $(0,1,0)$ (unstable, orange oval), so either playing $O$ or $M$. The second counterpart game also has two Nash equilibria, i.e., $(1,0,0)$ and $(0,0,1)$ playing either $O$ or $M$ as well. Note there are also two rest points at the faces formed by $O$ and $R$ and $O$ and $M$, which are not Nash (see Figure \ref{fig:rest_point} for an explanation). There is no mixed equilibrium of full support, so we cannot apply Theorem \ref{the:one} here. If we apply Theorem \ref{the:two} we know that $((1,0,0),(1,0,0))$ must also be a pure Nash equilibrium in the original asymmetric game, and we can remove the dummy strategy for player 1.
At this stage we are left with equilibria $(x=(0.6,0.4,0),y=(0.4,0,0.6))$ and $(x=(0,1,0),y=(0,0,1))$ in the asymmetric game for which we did not find a symmetric counterpart at this stage. Now the permutation of the counterpart games, explained earlier in the findings section, comes into play. Recall that in order to study all configurations of supports of the same cardinal for both players one needs to simply permute the actions of one player in matrix $A$ and $B$. Let's have a look at such a permutation, specifically, let's permute the 2nd and 3rd action for player 2, resulting in Tables \ref{fig:CP-extBoS1perm} and \ref{fig:CP-extBoS2perm}.

\begin{table}[h!]
	\centering
\begin{minipage}[c]{.48\textwidth}
   \begin{game}{3}{3}[][]
   	   &  O     &  M   & R  \\
   	 O  &    $3$      & $0$ & $0.5$\\
   	 M &  $0$ & $2$ & $0.5$\\
   	 D & $-1$ & $-1$ & $-1$\\
   \end{game}
   \caption{Permuted payoff matrix for the 1st counterpart game of the Extended BoS game.}
   \label{fig:CP-extBoS1perm}
\end{minipage}%
\hfill
\begin{minipage}[c]{.48\textwidth}
   \begin{game}{3}{3}[][]
   	  &  O     &  M   & R  \\
   	 O  &    $2$      & $0$ & $0.5$\\
   	 M &  $0$ & $3$ & $0.55$\\
   	 D & $-1$ & $-1$ & $-1$\\
   \end{game}
   \caption{Permuted payoff matrix for the 2nd counterpart game of the Extended BoS game.}
   \label{fig:CP-extBoS2perm}
\end{minipage}
\end{table}

Again we can analyse these counterpart games. Specifically, we find Nash equilibria $(1,0,0)$, $(0.4,0.6,0)$, and $(0,1,0)$ for permuted counterpart game 1 (Table \ref{fig:CP-extBoS1perm}), and Nash equilibria $(0,0,1)$, $(0.6,0.4,0)$, $(0,1,0)$, and $(1,0,0)$ for permuted counterpart game 2 (Table \ref{fig:CP-extBoS2perm}), which are illustrated in Figures \ref{fig:dfieldextBoSCP1perm} and \ref{fig:dfieldextBoSCP2perm}. From these identified Nash equilibria in both counterpart games we can combine the remaining Nash equilibria for the asymmetric game. Specifically, by applying Theorem \ref{the:two} we find $(x=(0.6,0.4,0),y=(0.4,0.6,0))$, which translates into $(x=(0.6,0.4,0),y=(0.4,0,0.6))$ for the asymmetric game as we permuted actions 2 and 3 for the second player and we need to swap these again. Additionally, we also find $(x=(0,1,0),y=(0,1,0))$, which translates into equilibrium $(x=(0,1,0),y=(0,0,1))$ for the asymmetric game as we permuted action 2 and 3 for the second player. Now we have found all Nash equilibria of the original asymmetric game. 

So, also in this case, i.e., when the game is not square and strategies need to be permuted, the theorems are still applicable and allow for analysis of the original asymmetric game.

\begin{figure}[!tbp]
  \centering
  \begin{minipage}[b]{0.45\textwidth}
     \includegraphics[width=9cm]{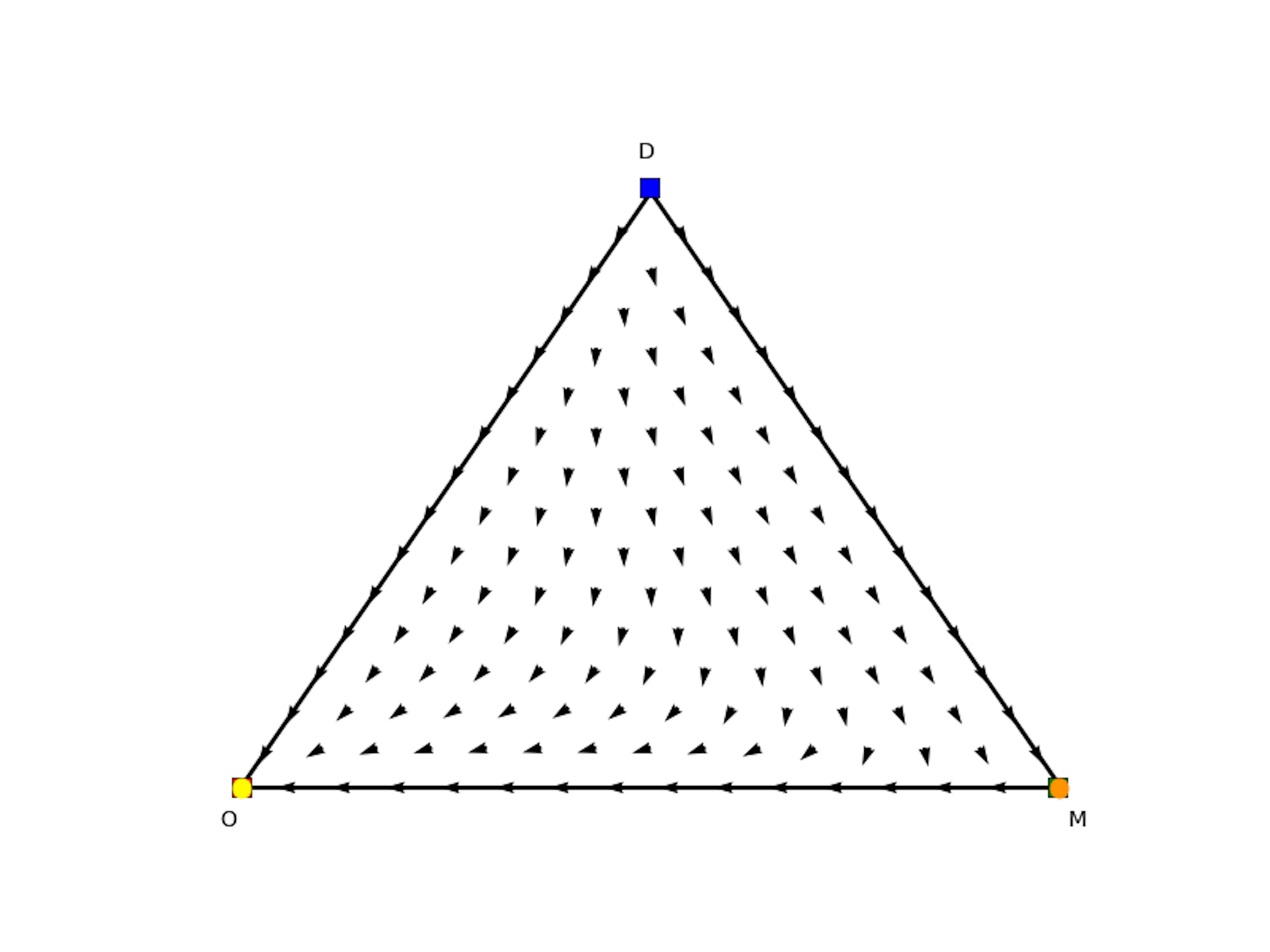}
    \caption{Legend. Directional field plot $\Sigma_3$ of the first counterpart game of the extended Battle of the Sexes game.}
    \label{fig:dfieldextBoSCP1}
  \end{minipage}
 \qquad
  \begin{minipage}[b]{0.45\textwidth}
\includegraphics[width=9cm]{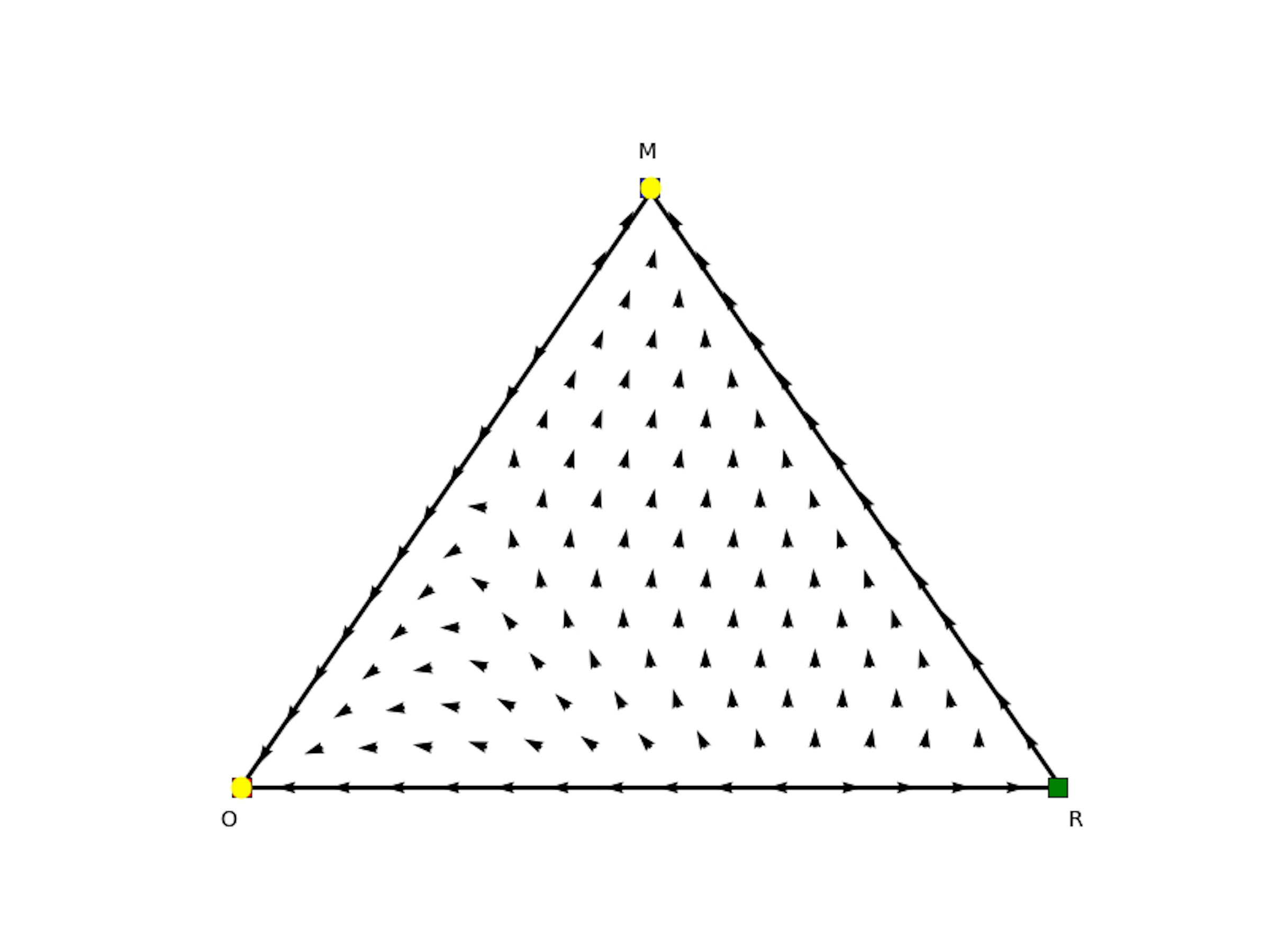}
    \caption{Legend. Directional field plot $\Sigma_3$ of the second counterpart game of the extended Battle of the Sexes game.}
    \label{fig:dfieldextBoSCP2}
  \end{minipage}
\end{figure}

\begin{figure}[!tbp]
  \centering
  \begin{minipage}[b]{0.45\textwidth}
     \includegraphics[width=9cm]{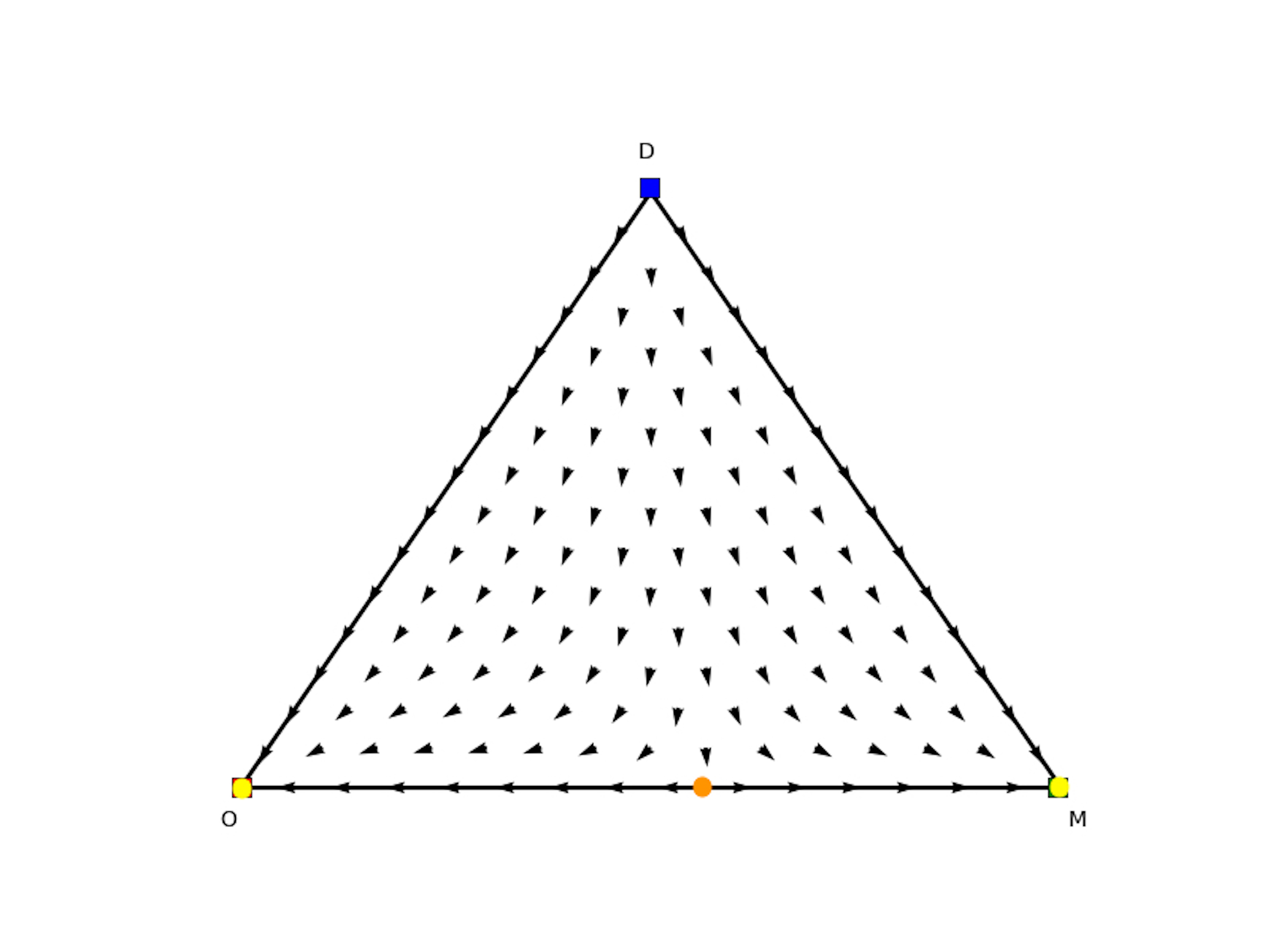}
    \caption{Legend. Directional field plot $\Sigma_3$ of the first counterpart game of the permuted extended Battle of the Sexes game.}
    \label{fig:dfieldextBoSCP1perm}
  \end{minipage}
 \qquad
  \begin{minipage}[b]{0.45\textwidth}
\includegraphics[width=9cm]{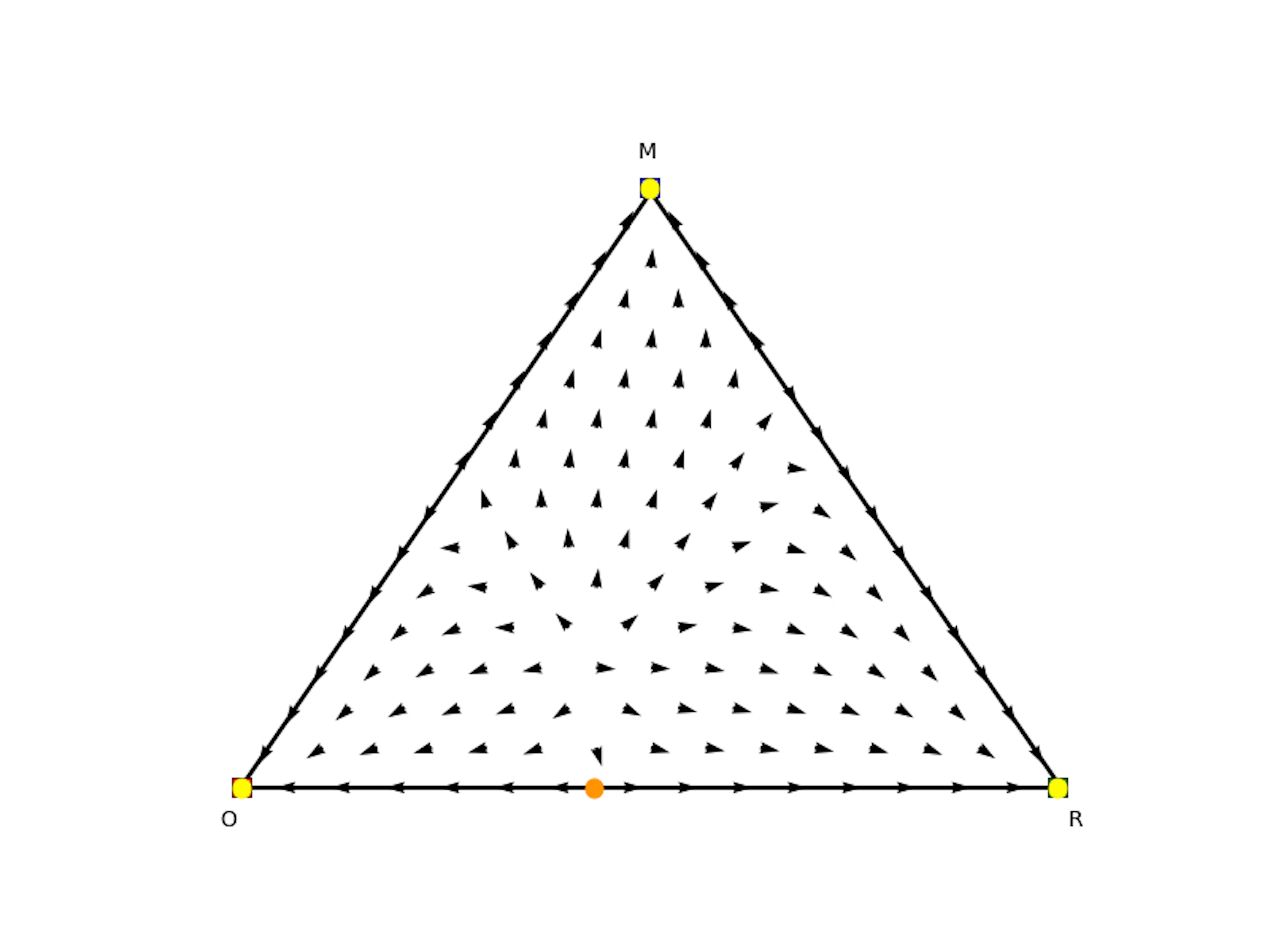}
    \caption{Legend. Directional field plot $\Sigma_3$ of the second counterpart game of the permuted extended Battle of the Sexes game.}
    \label{fig:dfieldextBoSCP2perm}
  \end{minipage}
\end{figure}

\subsubsection*{Poker generated asymmetric games}

Policy Space Response Oracles (PSRO) is a multiagent reinforcement learning process that reduces the strategy space of large extensive-form games via iterative best response computation. PSRO can be seen as a generalized form of fictitious play that produces approximate best responses, with arbitrary distributions over generated responses computed by meta-strategy solvers. 
PSRO was applied to a commonly-used benchmark problem in artificial intelligence research known as Leduc poker~\cite{Southey05}. Leduc poker has a deck of 6 cards (jack, queen, king in two suits). Each player receives an initial private card, can bet a fixed amount of 2 chips in the first round, 4 chips in the second round (with a maxium of two raises in each round). Before the second round starts, a public card is revealed. 

In Table \ref{fig:PSROgame} we present such an asymmetric $3 \times 3$ 2-player PSRO generated game, playing Leduc Poker. In the game illustrated here, each player has three strategies that, for ease of the exposition, we call $\{A, B, C\}$ for player 1, and $\{D, E, F\}$ for player 2. Each one of these strategies represents a larger strategy in the full extensive-form game of Leduc poker, specifically an approximate best response to a distribution over previous opponent strategies. The game produced here then is truly asymmetric, in the sense that the strategy spaces in the original game are inherently asymmetric since player 1 always starts each round, the strategy spaces are defined by different (mostly unique) betting sequences, and even under perfect equilibrium play there is a slight advantage to player 2~\cite{Lanctot17}. So, both players have significantly different strategy sets. In Tables \ref{fig:CP-PSRO1} and \ref{fig:CP-PSRO2} we show the two symmetric counterpart games of the empirical game produced by PSRO on Leduc poker.



\begin{table}[h!]
	\centering
   \begin{game}{3}{3}[][]
   	 & D & E & F  \\
    A & $-0.16,0.16$ & $0.05,-0.05$ & $0.08,-0.08$ \\
    B & $0.64,-0.64$ & $-0.17,0.17$ & $-0.31,0.31$  \\
    C & $0.79,-0.79$ & $1.06,-1.06$ & $-0.37,0.37$  \\
   \end{game}
\caption{Payoff matrix of an asymmetric empirical game produced by PSRO applied to Leduc poker.}
 \label{fig:PSROgame}
\end{table}

\begin{table}[h!]
	\centering
\begin{minipage}{.48\textwidth}
   \begin{game}{3}{3}[][]
   	    &  A     &  B   & C  \\
   	 A  &    $-0.16$      & $0.05$ & $0.08$\\
   	 B &  $0.64$ & $-0.17$ & $-0.31$\\
   	 C & $0.79$ & $1.06$ & $-0.37$\\
   \end{game}
   \caption{First counterpart game of the Leduc poker empirical game.}
   \label{fig:CP-PSRO1}
\end{minipage}%
\hfill
\begin{minipage}{.48\textwidth}
   \begin{game}{3}{3}[][]
   	  &  D     &  E   & F  \\
   	 D  &    $0.16$      & $-0.05$ & $-0.08$\\
   	 E &  $-0.64$ & $0.17$ & $0.31$\\
   	 F & $-0.79$ & $-1.06$ & $0.37$\\
   \end{game}
   \caption{Second counterpart game of the Leduc poker empirical game.}
   \label{fig:CP-PSRO2}
\end{minipage}
\end{table}

Again we can now analyse the landscape of equilibria of this game using the introduced theorems. Since the Leduc poker empirical game is asymmetric we need two populations for the asymmetric replicator equations. As mentioned before, analysing and plotting the evolutionary asymmetric replicator dynamics now quickly becomes very tedious as we deal with two simplices, one for each player. More precisely, if we consider a strategy for one player in its corresponding simplex, and that player is adjusting its strategy, will immediately cause the trajectory in the second simplex to change, and vice versa. Consequently, it is not straightforward anymore to analyse the dynamics and equilibrium landscape for both players, as any trajectory in one simplex causes the other simplex to change. A movie illustrates what is meant: we show how the dynamics of player 2 changes in function of player 1. We overlay the simplex of the second player with the simplex of the first player; the yellow dots indicate what the strategy of the first player is. The movie then shows how the dynamics of the second player changes when the yellow dot changes, see \url{https://youtu.be/10m0f3iBECc}.

In order to facilitate the process of analysing this game we can apply the counterpart RD theorems here to remedy the problem, and consequently analyse the game in the far simpler symmetric counterpart games that will shed light onto the equilibrium landscape of the Leduc Poker empirical game.

In Figures \ref{fig:dfieldPSROP1} and \ref{fig:dfieldPSROP2} we show the evolutionary dynamics of the counterpart games. As can be observed in Figure \ref{fig:dfieldPSROP1} the first counterpart game has only one equilibrium, i.e., a mixed Nash equilibrium at the face formed by $A$ and $C$, which absorbs the entire strategy space. Looking at Figure \ref{fig:dfieldPSROP2} we see the situation is a bit more complex in the second counterpart game, here we observe three Nash equilibria: one pure at strategy $D$, one pure at strategy $F$, and one unstable mixed equilibrium at the 1-face formed by strategies $D$ and $F$. Note there is also a rest point at the face formed by strategies $D$ and $E$, which is not Nash. Given that there are no mixed equilibria with full support in both games we cannot apply Theorem \ref{the:one}. Using Theorem \ref{the:two} we now know that we only maintain the two mixed equilibria, i.e. $(0.32,0,0.68)$ (CP1) and $(0.83,0,0.17)$ (CP2), forming the mixed Nash equilibrium $(x=(0.83,0,0.17),y=(0.32,0,0.68))$ of the asymmetric Leduc poker empirical game. The other equilibria in the second counterpart game can be discarded as candidates for Nash equilibria in the Leduc poker empirical game since they also do not appear for player 1 when we permute the strategies for player 1 (not shown here).

\begin{figure}[!tbp]
  \centering
  \begin{minipage}[b]{0.45\textwidth}
     \includegraphics[width=9cm]{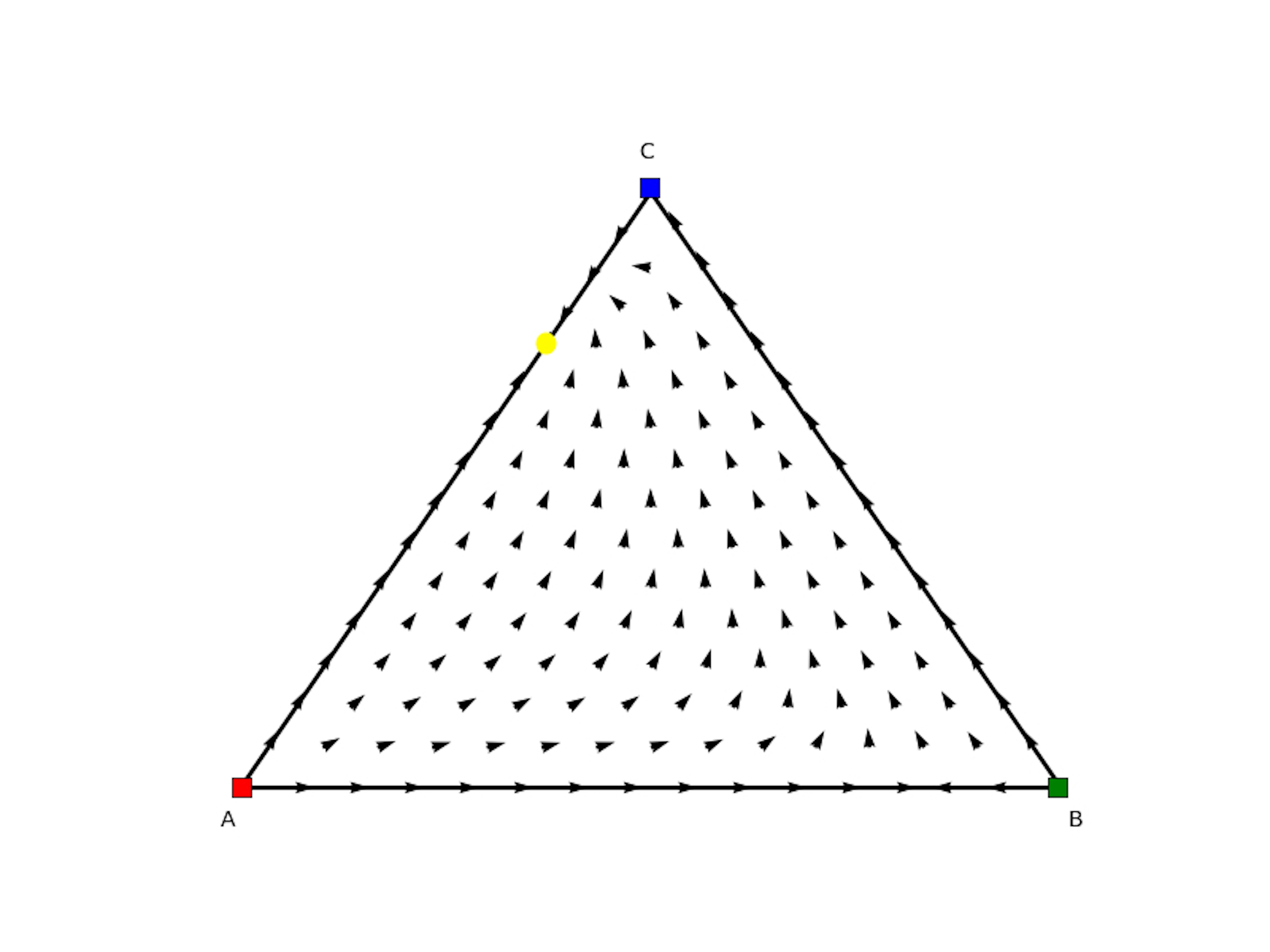}
     \caption{Legend. Directional field plot $\Sigma_3$ of the first counterpart game of the Leduc poker empirical game under study.}
    \label{fig:dfieldPSROP1}
  \end{minipage}
 \qquad
  \begin{minipage}[b]{0.45\textwidth}
\includegraphics[width=9cm]{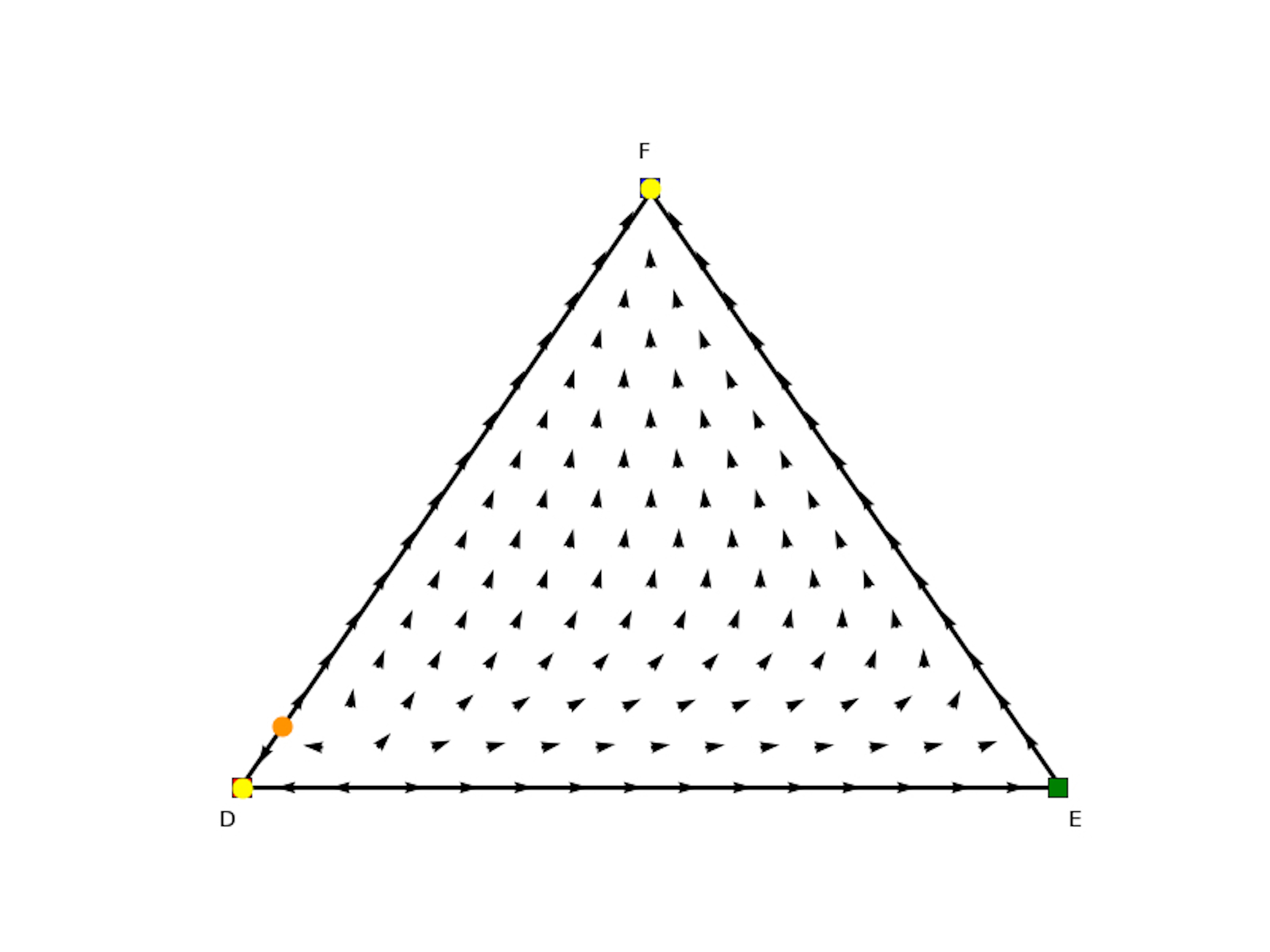}
   \caption{Legend. Directional field plot $\Sigma_3$ of the second counterpart game of the Leduc poker empirical game under study.}
    \label{fig:dfieldPSROP2}
  \end{minipage}
\end{figure}

\subsubsection*{Mixed equilibrium of full support}

As a final example to illustrate the introduced theory, we examine an asymmetric game, that has one completely mixed equilibrium and several equilibria in its counterpart games. The bimatrix game $(A,B)$ is illustrated in Table \ref{fig:asymgamefull} and its symmetric counterparts are shown in Tables \ref{fig:CP-asym1} and \ref{fig:CP-asym2}. 
\begin{table}[h!]
	\centering
   \begin{game}{3}{3}[][]
   	 & D & E & F  \\
    A & $1,0$ & $0,1$ & $2,0$ \\
    B & $0,1$ & $2,0$ & $0,0$  \\
    C & $2,0$ & $0,0$ & $1,1$  \\
   \end{game}
\caption{Payoff matrix of an asymmetric game with mixed equilibrium of full support.}
 \label{fig:asymgamefull}
\end{table}

The asymmetric game has a unique completely mixed Nash equilibrium with different mixtures for the two players, i.e., $(x=(\frac{1}{3},\frac{1}{3},\frac{1}{3});y=(\frac{2}{7},\frac{3}{7},\frac{2}{7}))$.

\begin{table}[h!]
	\centering
\begin{minipage}{.48\textwidth}
   \begin{game}{3}{3}[][]
   	    &  A     &  B   & C  \\
   	 A  &    $1$      & $0$ & $2$\\
   	 B &  $0$ & $2$ & $0$\\
   	 C & $2$ & $0$ & $1$\\
   \end{game}
   \caption{First counterpart game of the asymmetric game.}
   \label{fig:CP-asym1}
\end{minipage}%
\hfill
\begin{minipage}{.48\textwidth}
   \begin{game}{3}{3}[][]
   	  &  D     &  E   & F  \\
   	 D  &    $0$      & $1$ & $0$\\
   	 E &  $1$ & $0$ & $0$\\
   	 F & $0$ & $0$ & $1$\\
   \end{game}
   \caption{Second counterpart game of the asymmetric game.}
   \label{fig:CP-asym2}
\end{minipage}
\end{table}

The two symmetric counterpart game each have seven equilibria. Counterpart game 1 (Table \ref{fig:CP-asym1}), has the following set of Nash equilibria:
$\{\textcolor{blue}{(a)} (p_1=(\frac{2}{7},\frac{3}{7},\frac{2}{7}),p_2=(\frac{2}{7},\frac{3}{7},\frac{2}{7})),(p_1=(\frac{1}{2},\frac{1}{2},0),p_2=(0,\frac{1}{2},\frac{1}{2})),(p_1=(1,0,0),p_2=(0,0,1)),\textcolor{blue}{(b)} (p_1=(0,1,0),p_2=(0,1,0)),(\textcolor{blue}{(c)} p_1=(\frac{1}{2},0,\frac{1}{2}),p_2=(\frac{1}{2},0,\frac{1}{2})),(p_1=(0,0,1),p_2=(1,0,0)),(p_1=(0,\frac{1}{2},\frac{1}{2}),p_2=(\frac{1}{2},\frac{1}{2},0))\}$. Note that there are also two rest points, which are not Nash, at the faces formed by $A$ and $B$ and $B$ and $C$.
From these seven equilibria only (a), (b) and (c) are of interest since these are symmetric equilibria in which both players play with the same strategy (or support). Also counterpart game 2 has seven equilibria, i.e., $\{\textcolor{blue}{(d)} (p_1=(\frac{1}{3},\frac{1}{3},\frac{1}{3}),p_2=(\frac{1}{3},\frac{1}{3},\frac{1}{3})),(p_1=(\frac{1}{2},0,\frac{1}{2}),p_2=(0,\frac{1}{2},\frac{1}{2})),(\textcolor{blue}{(e)} p_1=(0,0,1),p_2=(0,0,1)),(p_1=(0,\frac{1}{2},\frac{1}{2}),p_2=(\frac{1}{2},0,\frac{1}{2})),(\textcolor{blue}{(f)} p_1=(\frac{1}{2},\frac{1}{2},0),p_2=(\frac{1}{2},\frac{1}{2}),0),(p_1=(1,0,0),p_2=(0,1,0)),(p_1=(0,1,0),p_2=(1,0,0)\}$ 
of which only (d), (e) and (f) are of interest.

We observe that only the completely mixed equilibrium of the asymmetric game, i.e., $(x=(\frac{1}{3},\frac{1}{3},\frac{1}{3});y=(\frac{2}{7},\frac{3}{7},\frac{2}{7}))$, has its counterpart in the symmetric games. To apply the theorems we only need to have a look at equilibria (a), (b) and (c) in counterpart game 1, and (d), (e) and (f) in counterpart game 2. These equilibria can also be observed in the directional field plots, respectively, trajectory plots, illustrating the evolutionary dynamics of both counterpart games in Figures \ref{fig:dfieldasymP1},   \ref{fig:tplotasymP1}, \ref{fig:dfieldasymP2} and \ref{fig:tplotasymP2}. 
Figure \ref{fig:dfieldasymP1} visualises the three remaining equilibria (a), (b) and (c), with (a) indicated as a yellow oval, and (b) and (c) both indicated as green ovals. As can be observed, (a) is an unstable mixed equilibrium, (b) is a stable pure equilibrium, and (c) is a partly mixed equilibrium at the 2-face formed by strategies A and C. 

We can make the same observation for the second counterpart game, and see that (d), (e) and (f) are equilibria in Figure \ref{fig:dfieldasymP2}. Equilibrium (d), indicated by a yellow oval, is completely mixed, equilibrium (e) is a pure equilibrium in corner F (green oval), and (f) is a partly mixed equilibrium on the 2-face formed by strategies D and E (green ovals as well).

If we now apply Theorem \ref{the:one} we know that we can combine the mixed equilibria of full support of both counterpart games into the mixed equilibrium of the original asymmetric game, in which the mixed equilibrium of counterpart game 1, i.e. $(\frac{2}{7},\frac{3}{7},\frac{2}{7})$, becomes the part of the mixed equilibrium in the asymmetric game of player 2, and the mixed equilibrium of counterpart game 2, i.e. $(\frac{1}{3},\frac{1}{3},\frac{1}{3})$, becomes the part of the mixed equilibrium in the asymmetric game of player 1, leading to $(x=(\frac{1}{3},\frac{1}{3},\frac{1}{3});y=(\frac{2}{7},\frac{3}{7},\frac{2}{7}))$. Both equilibria are unstable in the counterpart games and also form an unstable mixed equilibrium in the asymmetric game.




\begin{figure}[!tbp]
  \centering
  \begin{minipage}[b]{0.45\textwidth}
     \includegraphics[width=9cm]{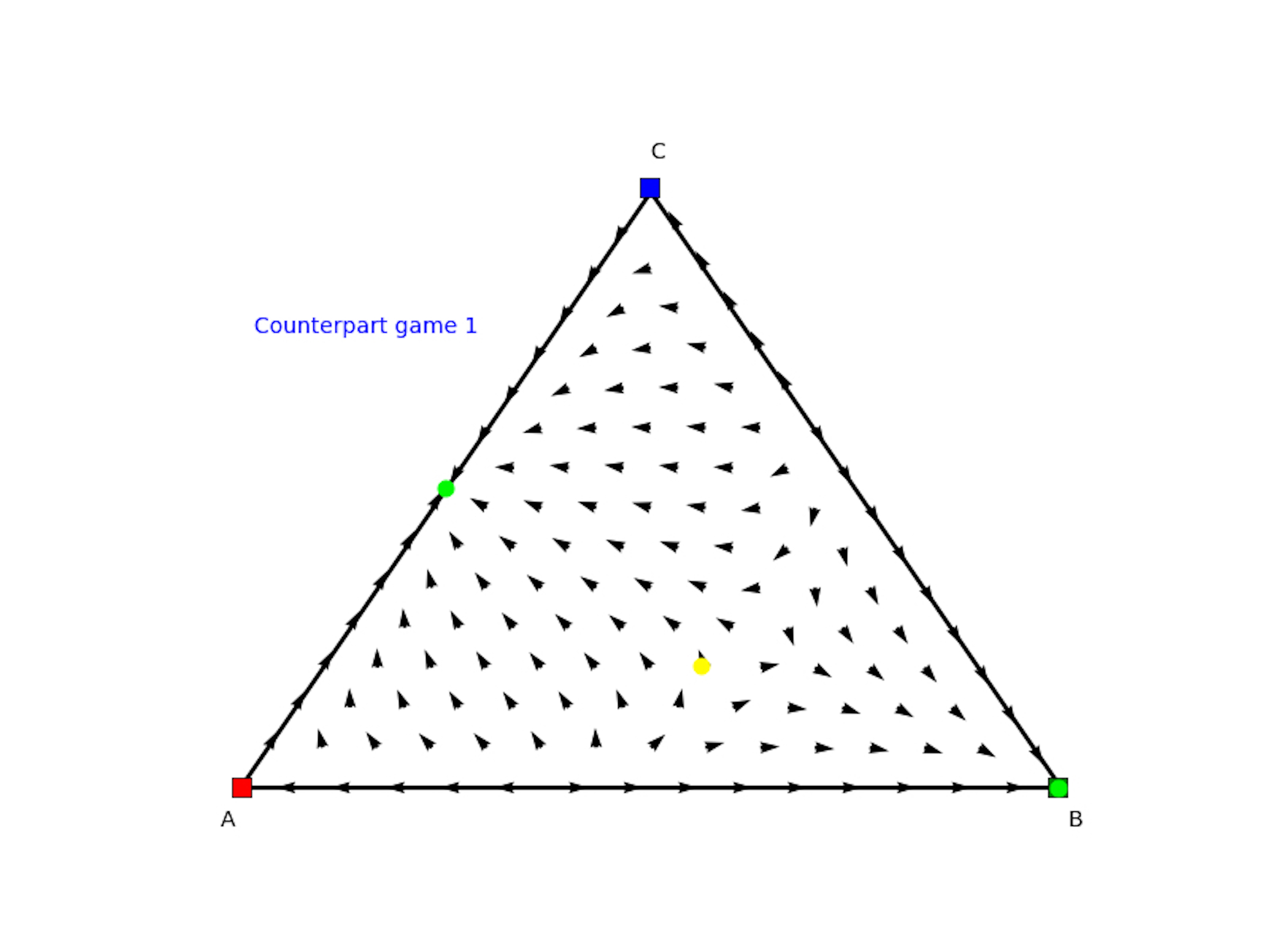}
     \caption{Legend. Directional field plot $\Sigma_3$ of the first counterpart game of the mixed equilibrium asymmetric game.}
    \label{fig:dfieldasymP1}
  \end{minipage}
 \qquad
  \begin{minipage}[b]{0.45\textwidth}
\includegraphics[width=9cm]{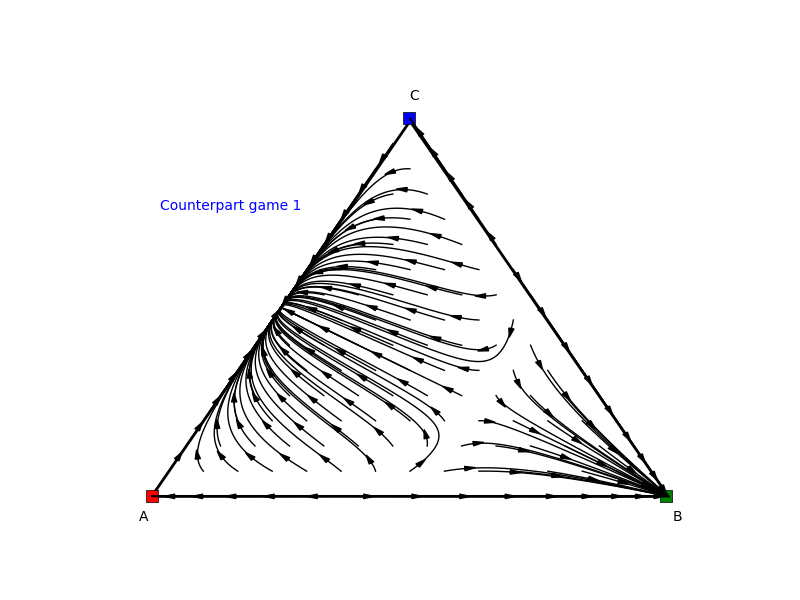}
   \caption{Legend. Trajectory plot $\Sigma_3$ of the first counterpart game of the mixed equilibrium asymmetric game.}
    \label{fig:tplotasymP1}
  \end{minipage}
\end{figure}

\begin{figure}[!tbp]
  \centering
  \begin{minipage}[b]{0.45\textwidth}
     \includegraphics[width=9cm]{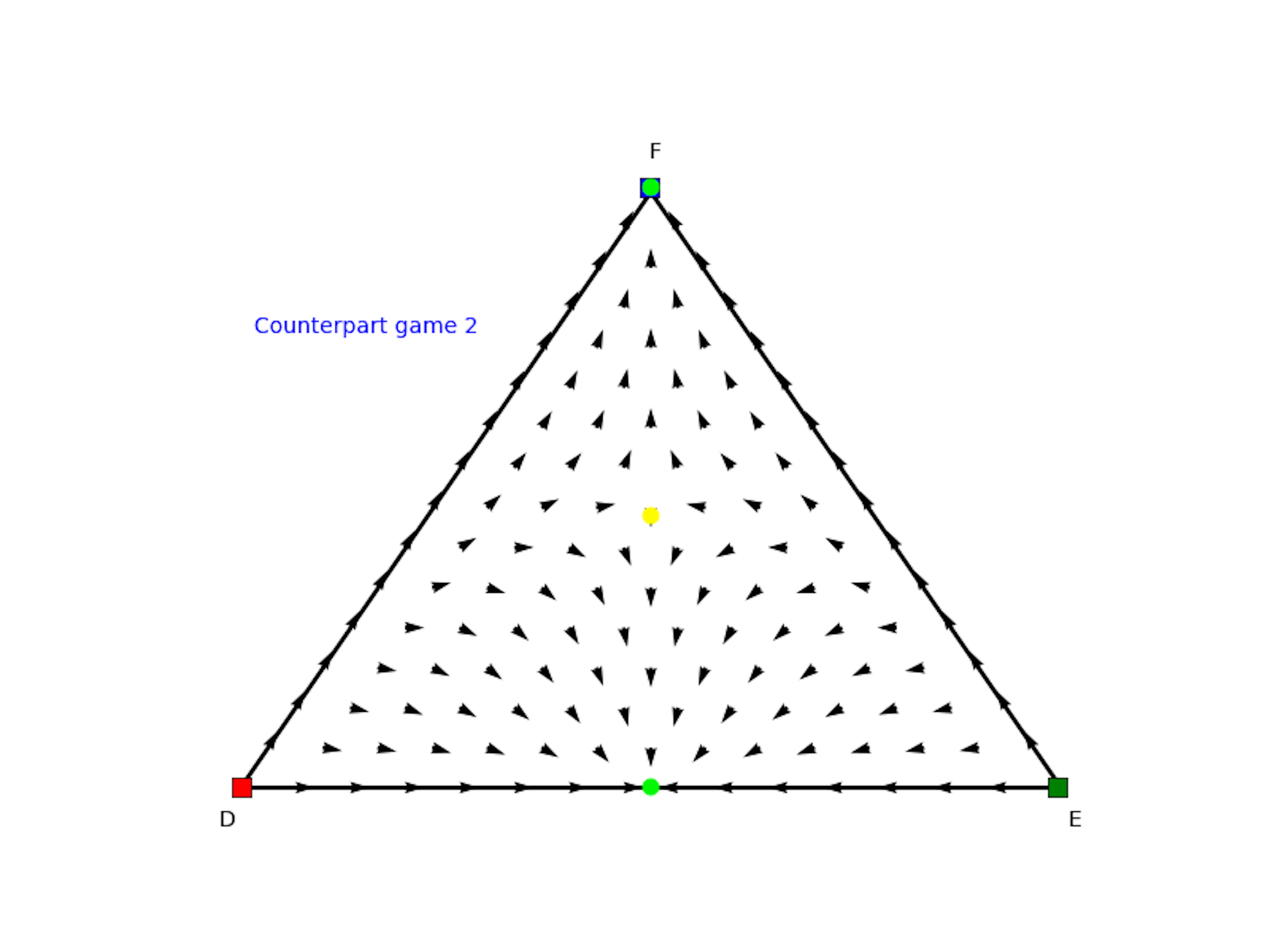}
     \caption{Legend. Directional field plot $\Sigma_3$ of the second counterpart game of the mixed equilibrium asymmetric game.}
    \label{fig:dfieldasymP2}
  \end{minipage}
 \qquad
  \begin{minipage}[b]{0.45\textwidth}
\includegraphics[width=9cm]{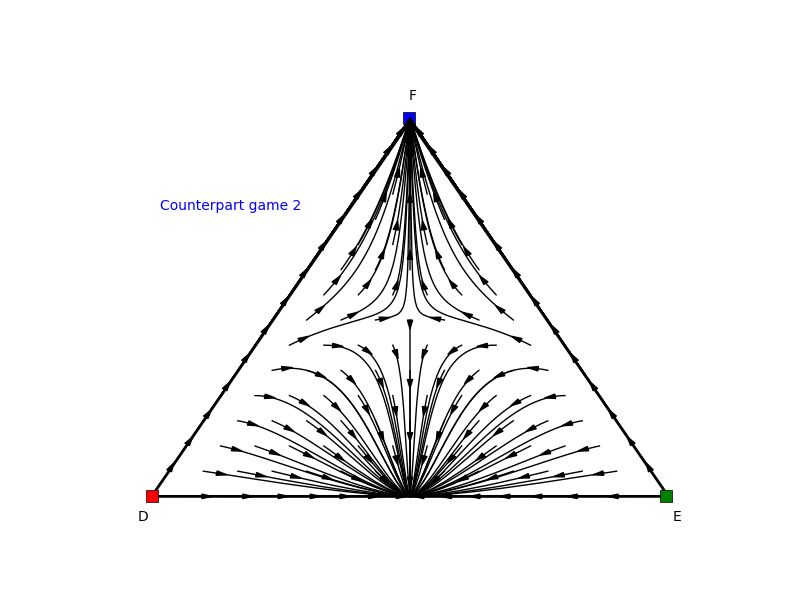}
   \caption{Legend. Trajectory plot $\Sigma_3$ of the second counterpart game of the mixed equilibrium asymmetric game.}
    \label{fig:tplotasymP2}
  \end{minipage}
\end{figure}

\section*{Discussion}

Replicator Dynamics have proved to be an excellent tool to analyse the Nash landscape of multiagent interactions and distributed learning in both abstract games and complex systems~\cite{BloembergenTHK15,Walsh02,Wellman06,TuylsP07}.  
The predominant approach has been the use of symmetric replicator equations, allowing for a relatively straightforward analysis in symmetric games. Many interesting real-world settings though involve roles or player-types for the different agents that take part in an interaction, and as such are \emph{asymmetric} in nature. So far, most research has avoided to carry out RD analysis in this type of interactions, by either constructing a new symmetric game, in which the various actions of the different roles are joined together in one population~\cite{Cressman03,Accinelli2011}, or by considering the various roles and strategies as heuristics, grouped in one population as well~\cite{Walsh02,Walsh03,PhelpsPM04}. In the latter approach the payoffs due to different player-types are averaged over many samples of the player type resulting in a single average payoff to each player for each entry in the payoff table. 

The work presented in this paper takes a different stance by decomposing an asymmetric game into its symmetric counterparts. This method proves to be mathematically simple and elegant, and allows for a straightforward analysis of asymmetric games, without the need for turning the strategy spaces into one simplex or population, but instead allows to keep separate simplices for the involved populations of strategies. Furthermore, the counterpart games allow to get insight in the type and form of interaction of the asymmetric game under study, identifying its equilibrium structure and as such enabling analysis of abstract and empirical games discovered through multiagent learning processes (e.g. Leduc poker empirical game), as was shown in the experimental section. 

A deeper counter-intuitive understanding of the theoretical results of this paper is that when identifying Nash equilibria in the counterpart games with \emph{matching} support (including permutations of strategies for one of the players), it turns out that also the combination of those equilibria form a Nash equilibrium in the corresponding asymmetric game. In general, the vector field for the evolutionary dynamics of one player is a function of the other player's strategy, and hence a vector field in one player's simplex doesn't carry much information as any equilibria you observe in it are changing with time as the other player is moving too. However, if you position the second player at a Nash equilibrium, it turns out that player one becomes indifferent between his different strategies, and remains stationary under the RD. This gives the unique situation in which the vector field plot for the second player's simplex is actually meaningful, because the assumption of player one being stationary actually holds (and vice versa). This is what we end up using when establishing the correspondence of the Nash Equilibria in asymmetric and counterpart games, and why the single-simplex plots for the counterpart games are actually meaningful for the asymmetric game - but this is also why they only describe the Nash Equilibria faithfully, but fail to be a valid decomposition of the full asymmetric game away from equilibrium. 

These findings shed new light on asymmetric interactions between multiple agents and provide new insights that facilitate a thorough and convenient analysis of asymmetric games. As pointed out by Veller and Hayward\cite{Veller}, many real-world situations, in which one aims to study evolutionary or learning dynamics of several interacting agents, are better modelled by asymmetric games. As such these theoretical findings can facilitate deeper analysis of equilibrium structures in evolutionary asymmetric games relevant to various topics including economic theory, evolutionary biology, empirical game theory, the evolution of cooperation, evolutionary language games and artificial intelligence\cite{Baek16,Hilbe17,Allen17,Santos16,Moreira13,Steels00}.

Finally, the results of this paper also nicely underpin what is said in H. Gintis' book on the evolutionary dynamics of asymmetric games, i.e., \emph{'although the static game pits the row player against the column player, the evolutionary dynamic pits row players against themselves and column players against themselves'}\cite{Gintis09} (chapter 12, p.292). He also indicates that this aspect of an evolutionary dynamic is often misunderstood. The use of our counterpart dynamics supports and illustrates this statement very clearly, showing that in the counterpart games species play games within a population and as such show an intra-species survival of the fittest, which is then combined into an equilibrium of the asymmetric game.

\section*{Data availability statement}

No datasets were generated or analysed during the current study.


\begin{thebibliography}{}
\expandafter\ifx\csname url\endcsname\relax
  \def\url#1{\texttt{#1}}\fi
\expandafter\ifx\csname urlprefix\endcsname\relax\def\urlprefix{URL }\fi
\expandafter\ifx\csname doiprefix\endcsname\relax\def\doiprefix{DOI }\fi
\providecommand{\bibinfo}[2]{#2}
\providecommand{\eprint}[2][]{\url{#2}}

\end{thebibliography}


\begin{thebibliography}{10}
\expandafter\ifx\csname url\endcsname\relax
  \def\url#1{\texttt{#1}}\fi
\expandafter\ifx\csname urlprefix\endcsname\relax\def\urlprefix{URL }\fi
\expandafter\ifx\csname doiprefix\endcsname\relax\def\doiprefix{DOI }\fi
\providecommand{\bibinfo}[2]{#2}
\providecommand{\eprint}[2][]{\url{#2}}

\bibitem{BloembergenTHK15}
\bibinfo{author}{Bloembergen, D.}, \bibinfo{author}{Tuyls, K.},
  \bibinfo{author}{Hennes, D.} \& \bibinfo{author}{Kaisers, M.}
\newblock \bibinfo{journal}{\bibinfo{title}{Evolutionary dynamics of
  multi-agent learning: {A} survey}}.
\newblock {\emph{\JournalTitle{J. Artif. Intell. Res.}}}
  \textbf{\bibinfo{volume}{53}}, \bibinfo{pages}{659--697}
  (\bibinfo{year}{2015}).

\bibitem{Walsh02}
\bibinfo{author}{Walsh, W.~E.}, \bibinfo{author}{Das, R.},
  \bibinfo{author}{Tesauro, G.} \& \bibinfo{author}{Kephart, J.}
\newblock \bibinfo{title}{Analyzing complex strategic interactions in
  multi-agent games}.
\newblock In \emph{\bibinfo{booktitle}{Proceedings of the Fourth Workshop on
  Game-Theoretic and Decision-Theoretic Agents}}, \bibinfo{pages}{109--118}
  (\bibinfo{year}{2002}).

\bibitem{Walsh03}
\bibinfo{author}{Walsh, W.~E.}, \bibinfo{author}{Parkes, D.~C.} \&
  \bibinfo{author}{Das, R.}
\newblock \bibinfo{title}{Choosing samples to compute heuristic-strategy nash
  equilibrium}.
\newblock In \emph{\bibinfo{booktitle}{Proceedings of the Fifth Workshop on
  Agent-Mediated Electronic Commerce}}, \bibinfo{pages}{109--123}
  (\bibinfo{year}{2003}).

\bibitem{TuylsP07}
\bibinfo{author}{Tuyls, K.} \& \bibinfo{author}{Parsons, S.}
\newblock \bibinfo{journal}{\bibinfo{title}{What evolutionary game theory tells
  us about multiagent learning}}.
\newblock {\emph{\JournalTitle{Artif. Intell.}}}
  \textbf{\bibinfo{volume}{171}}, \bibinfo{pages}{406--416}
  (\bibinfo{year}{2007}).

\bibitem{PonsenTKR09}
\bibinfo{author}{Ponsen, M. J.~V.}, \bibinfo{author}{Tuyls, K.},
  \bibinfo{author}{Kaisers, M.} \& \bibinfo{author}{Ramon, J.}
\newblock \bibinfo{journal}{\bibinfo{title}{An evolutionary game-theoretic
  analysis of poker strategies}}.
\newblock {\emph{\JournalTitle{Entertainment Computing}}}
  \textbf{\bibinfo{volume}{1}}, \bibinfo{pages}{39--45} (\bibinfo{year}{2009}).

\bibitem{Wellman06}
\bibinfo{author}{Wellman, M.~P.}
\newblock \bibinfo{title}{Methods for empirical game-theoretic analysis}.
\newblock In \emph{\bibinfo{booktitle}{Proceedings of The Twenty-First National
  Conference on Artificial Intelligence and the Eighteenth Innovative
  Applications of Artificial Intelligence Conference}},
  \bibinfo{pages}{1552--1556} (\bibinfo{year}{2006}).

\bibitem{PhelpsCMNPS07}
\bibinfo{author}{Phelps, S.} \emph{et~al.}
\newblock \bibinfo{title}{Auctions, evolution, and multi-agent learning}.
\newblock In \bibinfo{editor}{Tuyls, K.}, \bibinfo{editor}{Nowe, A.},
  \bibinfo{editor}{Guessoum, Z.} \& \bibinfo{editor}{Kudenko, D.} (eds.)
  \emph{\bibinfo{booktitle}{Adaptive Agents and Multi-Agent Systems {III.} 5th,
  6th, and 7th European Symposium on Adaptive and Learning Agents and
  Multi-Agent Systems, Revised Selected Papers}}, \bibinfo{pages}{188--210}
  (\bibinfo{publisher}{Springer}, \bibinfo{year}{2007}).

\bibitem{PhelpsPM04}
\bibinfo{author}{Phelps, S.}, \bibinfo{author}{Parsons, S.} \&
  \bibinfo{author}{McBurney, P.}
\newblock \bibinfo{title}{An evolutionary game-theoretic comparison of two
  double-auction market designs}.
\newblock In \bibinfo{editor}{Faratin, P.} \&
  \bibinfo{editor}{Rodriguez-Aguilar, J.~A.} (eds.)
  \emph{\bibinfo{booktitle}{Agent-Mediated Electronic Commerce VI, Theories for
  and Engineering of Distributed Mechanisms and Systems, Revised Selected
  Papers}}, \bibinfo{pages}{101--114} (\bibinfo{publisher}{Springer},
  \bibinfo{year}{2004}).

\bibitem{Lanctot17}
\bibinfo{author}{Lanctot, M.} \emph{et~al.}
\newblock \bibinfo{title}{A unified game-theoretic approach to multiagent
  reinforcement learning}.
\newblock In \emph{\bibinfo{booktitle}{Advances in Neural Information
  Processing Systems 30: Annual Conference on Neural Information Processing
  Systems}}, \bibinfo{pages}{4193--4206} (\bibinfo{year}{2017}).

\bibitem{Perc}
\bibinfo{author}{Perc, M.} \emph{et~al.}
\newblock \bibinfo{journal}{\bibinfo{title}{Statistical physics of human
  cooperation}}.
\newblock {\emph{\JournalTitle{Physics Reports}}}
  \textbf{\bibinfo{volume}{687}}, \bibinfo{pages}{1--51}
  (\bibinfo{year}{2017}).

\bibitem{Moreira13}
\bibinfo{author}{Moreira, J.~A.}, \bibinfo{author}{Pacheco, J.~M.} \&
  \bibinfo{author}{Santos, F.~C.}
\newblock \bibinfo{journal}{\bibinfo{title}{Evolution of collective action in
  adaptive social structures}}.
\newblock {\emph{\JournalTitle{Scientific Reports}}}
  \textbf{\bibinfo{volume}{3}}, \bibinfo{pages}{1521} (\bibinfo{year}{2013}).

\bibitem{Santos16}
\bibinfo{author}{Santos, F.~P.}, \bibinfo{author}{Pacheco, J.~M.} \&
  \bibinfo{author}{Santos, F.~C.}
\newblock \bibinfo{journal}{\bibinfo{title}{Evolution of cooperation under
  indirect reciprocity and arbitrary exploration rates}}.
\newblock {\emph{\JournalTitle{Scientific Reports}}}
  \textbf{\bibinfo{volume}{6}}, \bibinfo{pages}{37517} (\bibinfo{year}{2016}).

\bibitem{PerolatLZBTG17}
\bibinfo{author}{P{\'{e}}rolat, J.} \emph{et~al.}
\newblock \bibinfo{title}{A multi-agent reinforcement learning model of
  common-pool resource appropriation}.
\newblock In \emph{\bibinfo{booktitle}{Advances in Neural Information
  Processing Systems 30: Annual Conference on Neural Information Processing
  Systems}}, \bibinfo{pages}{3646--3655} (\bibinfo{year}{2017}).

\bibitem{LazaridouPB16b}
\bibinfo{author}{Lazaridou, A.}, \bibinfo{author}{Peysakhovich, A.} \&
  \bibinfo{author}{Baroni, M.}
\newblock \bibinfo{title}{Multi-agent cooperation and the emergence of
  (natural) language}.
\newblock In \emph{\bibinfo{booktitle}{5th International Conference on Learning
  Representations}} (\bibinfo{year}{2017}).

\bibitem{DeVylder}
\bibinfo{author}{De~Vylder, B.} \& \bibinfo{author}{Tuyls, K.}
\newblock \bibinfo{journal}{\bibinfo{title}{How to reach linguistic consensus:
  A proof of convergence for the naming game}}.
\newblock {\emph{\JournalTitle{Journal of Theoretical Biology}}}
  \textbf{\bibinfo{volume}{242}}, \bibinfo{pages}{818 -- 831}
  (\bibinfo{year}{2006}).

\bibitem{Cho}
\bibinfo{author}{Cho, I.} \& \bibinfo{author}{Kreps, D.}
\newblock \bibinfo{journal}{\bibinfo{title}{Signaling games and stable
  equilibria}}.
\newblock {\emph{\JournalTitle{The Quarterly Journal of Economics}}}
  \bibinfo{pages}{179--221} (\bibinfo{year}{1987}).

\bibitem{Nowak06}
\bibinfo{author}{Nowak, M.~A.}
\newblock \emph{\bibinfo{title}{Evolutionary Dynamics: Exploring the Equations
  of Life}} (\bibinfo{publisher}{Harvard University Press},
  \bibinfo{year}{2006}).

\bibitem{Tuyls03}
\bibinfo{author}{Tuyls, K.}, \bibinfo{author}{Verbeeck, K.} \&
  \bibinfo{author}{Lenaerts, T.}
\newblock \bibinfo{title}{A selection-mutation model for q-learning in
  multi-agent systems}.
\newblock In \emph{\bibinfo{booktitle}{The Second International Joint
  Conference on Autonomous Agents {\&} Multiagent Systems}},
  \bibinfo{pages}{693--700} (\bibinfo{year}{2003}).

\bibitem{Cressman14}
\bibinfo{author}{Cressman, R.} \& \bibinfo{author}{Tao, Y.}
\newblock \bibinfo{journal}{\bibinfo{title}{The replicator equation and other
  game dynamics}}.
\newblock {\emph{\JournalTitle{Proceedings of the National Academy of Sciences
  USA}}} \textbf{\bibinfo{volume}{111}}, \bibinfo{pages}{10810--10817}
  (\bibinfo{year}{2014}).

\bibitem{Selten80}
\bibinfo{author}{Selten, R.}
\newblock \bibinfo{journal}{\bibinfo{title}{A note on evolutionary stable
  strategies in asymmetric animal conflicts}}.
\newblock {\emph{\JournalTitle{Journal of Theoretical Biology}}}
  \textbf{\bibinfo{volume}{84}}, \bibinfo{pages}{93--101}
  (\bibinfo{year}{1980}).

\bibitem{Taylor79}
\bibinfo{author}{Taylor, P.}
\newblock \bibinfo{journal}{\bibinfo{title}{Evolutionarily stable strategies
  with two types of players}}.
\newblock {\emph{\JournalTitle{Journal of Applied Probability}}}
  \textbf{\bibinfo{volume}{16}}, \bibinfo{pages}{76--83}
  (\bibinfo{year}{1979}).

\bibitem{Guanersdorfer91}
\bibinfo{author}{Guanersdorfer, A.}, \bibinfo{author}{Hofbauer, J.} \&
  \bibinfo{author}{Sigmund, K.}
\newblock \bibinfo{journal}{\bibinfo{title}{On the dynamics of asymmetric
  games}}.
\newblock {\emph{\JournalTitle{Theoretical Population Biology}}}
  \textbf{\bibinfo{volume}{39}}, \bibinfo{pages}{345--357}
  (\bibinfo{year}{1991}).

\bibitem{Cressman03}
\bibinfo{author}{Cressman, R.}
\newblock \emph{\bibinfo{title}{Evolutionary Dynamics and Extensive Form
  Games}} (\bibinfo{publisher}{The MIT Press}, \bibinfo{year}{2003}).

\bibitem{Accinelli2011}
\bibinfo{author}{Accinelli, E.} \& \bibinfo{author}{Carrera, E. J.~S.}
\newblock \bibinfo{title}{Evolutionarily stable strategies and replicator
  dynamics in asymmetric two-population games}.
\newblock In \bibinfo{editor}{Peixoto, M.~M.}, \bibinfo{editor}{Pinto, A.~A.}
  \& \bibinfo{editor}{Rand, D.~A.} (eds.) \emph{\bibinfo{booktitle}{Dynamics,
  Games and Science I}}, \bibinfo{pages}{25--35}
  (\bibinfo{publisher}{Springer}, \bibinfo{year}{2011}).

\bibitem{Avoy15}
\bibinfo{author}{McAvoy, A.} \& \bibinfo{author}{Hauert, C.}
\newblock \bibinfo{journal}{\bibinfo{title}{Asymmetric evolutionary games}}.
\newblock {\emph{\JournalTitle{PLoS Comput Biol}}}
  \textbf{\bibinfo{volume}{11}}, \bibinfo{pages}{e1004349}
  (\bibinfo{year}{2015}).

\bibitem{Weibull97}
\bibinfo{author}{Weibull, J.}
\newblock \emph{\bibinfo{title}{Evolutionary Game Theory}}
  (\bibinfo{publisher}{MIT press}, \bibinfo{year}{1997}).

\bibitem{Hofbauer98}
\bibinfo{author}{Hofbauer, J.} \& \bibinfo{author}{Sigmund, K.}
\newblock \emph{\bibinfo{title}{Evolutionary Games and Population Dynamics}}
  (\bibinfo{publisher}{Cambridge University Press}, \bibinfo{year}{1998}).

\bibitem{Smith73}
\bibinfo{author}{Maynard~Smith, J.} \& \bibinfo{author}{Price, G.~R.}
\newblock \bibinfo{journal}{\bibinfo{title}{The logic of animal conflicts}}.
\newblock {\emph{\JournalTitle{Nature}}} \textbf{\bibinfo{volume}{246}},
  \bibinfo{pages}{15--18} (\bibinfo{year}{1973}).

\bibitem{Zeeman80}
\bibinfo{author}{Zeeman, E.}
\newblock \bibinfo{journal}{\bibinfo{title}{Population dynamics from game
  theory}}.
\newblock {\emph{\JournalTitle{Lecture Notes in Mathematics, Global theory of
  dynamical systems}}} \textbf{\bibinfo{volume}{819}} (\bibinfo{year}{1980}).

\bibitem{Zeeman81}
\bibinfo{author}{Zeeman, E.}
\newblock \bibinfo{journal}{\bibinfo{title}{Dynamics of the evolution of animal
  conflicts}}.
\newblock {\emph{\JournalTitle{Journal of Theoretical Biology}}}
  \textbf{\bibinfo{volume}{89}}, \bibinfo{pages}{249--270}
  (\bibinfo{year}{1981}).

\bibitem{Avis10}
\bibinfo{author}{Avis, D.}, \bibinfo{author}{Rosenberg, G.},
  \bibinfo{author}{Savani, R.} \& \bibinfo{author}{von Stengel, B.}
\newblock \bibinfo{journal}{\bibinfo{title}{Enumeration of nash equilibria for
  two-player games}}.
\newblock {\emph{\JournalTitle{Economic Theory}}}
  \textbf{\bibinfo{volume}{42}}, \bibinfo{pages}{9--37} (\bibinfo{year}{2010}).

\bibitem{Gintis09}
\bibinfo{author}{Gintis, H.}
\newblock \emph{\bibinfo{title}{Game Theory Evolving}}
  (\bibinfo{publisher}{Princeton University Press}, \bibinfo{year}{2009}).

\bibitem{Sandholm10}
\bibinfo{author}{Sandholm, W.}
\newblock \emph{\bibinfo{title}{Population Games and Evolutionary Dynamics}}
  (\bibinfo{publisher}{MIT Press}, \bibinfo{year}{2010}).

\bibitem{VonStengel20021723}
\bibinfo{author}{von Stengel, B.}
\newblock \bibinfo{title}{Computing equilibria for two-person games}.
\newblock In \bibinfo{editor}{Aumann, R.} \& \bibinfo{editor}{Hart, S.} (eds.)
  \emph{\bibinfo{booktitle}{Handbook of Game Theory with Economic
  Applications}}, \bibinfo{pages}{1723--1759} (\bibinfo{publisher}{Elsevier},
  \bibinfo{year}{2002}).

\bibitem{Southey05}
\bibinfo{author}{Southey, F.} \emph{et~al.}
\newblock \bibinfo{title}{Bayes' bluff: Opponent modelling in poker}.
\newblock In \emph{\bibinfo{booktitle}{Proceedings of the Twenty-First
  Conference on Uncertainty in Artificial Intelligence}},
  \bibinfo{pages}{550--558} (\bibinfo{year}{2005}).

\bibitem{Veller}
\bibinfo{author}{Veller, C.} \& \bibinfo{author}{Hayward, L.}
\newblock \bibinfo{journal}{\bibinfo{title}{Finite-population evolution with
  rare mutations in asymmetric games}}.
\newblock {\emph{\JournalTitle{Journal of Economic Theory}}}
  \textbf{\bibinfo{volume}{162}}, \bibinfo{pages}{93--113}
  (\bibinfo{year}{2016}).

\bibitem{Baek16}
\bibinfo{author}{Baek, S.}, \bibinfo{author}{Jeong, H.},
  \bibinfo{author}{Hilbe, C.} \& \bibinfo{author}{Nowak, M.}
\newblock \bibinfo{journal}{\bibinfo{title}{Comparing reactive and memory-one
  strategies of direct reciprocity}}.
\newblock {\emph{\JournalTitle{Scientific Reports}}}
  \textbf{\bibinfo{volume}{6}}, \bibinfo{pages}{25676} (\bibinfo{year}{2016}).

\bibitem{Hilbe17}
\bibinfo{author}{Hilbe, C.}, \bibinfo{author}{Martinez-Vaquero, L.},
  \bibinfo{author}{Chatterjee, K.} \& \bibinfo{author}{Nowak, M.}
\newblock \bibinfo{journal}{\bibinfo{title}{Memory-n strategies of direct
  reciprocity}}.
\newblock {\emph{\JournalTitle{Proceedings of the National Academy of Sciences
  USA}}} \textbf{\bibinfo{volume}{114}}, \bibinfo{pages}{4715--4720}
  (\bibinfo{year}{2017}).

\bibitem{Allen17}
\bibinfo{author}{Allen, B.} \emph{et~al.}
\newblock \bibinfo{journal}{\bibinfo{title}{Evolutionary dynamics on any
  population structure}}.
\newblock {\emph{\JournalTitle{Nature}}} \textbf{\bibinfo{volume}{544}},
  \bibinfo{pages}{227--230} (\bibinfo{year}{2017}).

\bibitem{Steels00}
\bibinfo{author}{Steels, L.}
\newblock \bibinfo{title}{Language as a complex adaptive system}.
\newblock In \emph{\bibinfo{booktitle}{Parallel Problem Solving from Nature -
  {PPSN} VI, 6th International Conference}}, \bibinfo{pages}{17--26}
  (\bibinfo{year}{2000}).

\end{thebibliography}


\section*{Acknowledgements}

We are very grateful to D. Bloembergen and O. Pietquin for helpful comments and discussions.

\section*{Author contributions statement}

K.T. and J.P. designed the research and theoretical contributions. K.T. implemented the experimental illustrations. K.T., J.P. and M.L. performed the simulations. All authors analysed the results and wrote and reviewed the paper.

\section*{Additional information}

The authors declare no competing financial interests.


\end{document}